\theoremstyle{plain}
\newtheorem{thm}{\protect\theoremname}
  \theoremstyle{plain}
  \newtheorem{lem}[thm]{\protect\lemmaname}
\newenvironment{lyxlist}[1]
{\begin{list}{}
{\settowidth{\labelwidth}{#1}
 \setlength{\leftmargin}{\labelwidth}
 \addtolength{\leftmargin}{\labelsep}
 }}
{\end{list}}
\newcommand{\abs}[1]{\left| #1\right|}
\newcommand{\proj}[1]{\ketbra{#1}{#1}}
\newcommand{\Tr}{\text{Tr\,}}
\newcommand{\ketbra}[2]{\ensuremath{| #1 \rangle \!\langle #2 |}}
\newcommand{\dist}{D}
\newcommand{\pur}{\mathcal{P}}
\newcommand{\coh}{\mathcal{C}}
\newcommand{\dis}{\mathcal{D}}
\newcommand{\ent}{\mathcal{E}}
\providecommand{\lemmaname}{Lemma}
\providecommand{\theoremname}{Theorem}
  \providecommand{\lemmaname}{Lemma}
\providecommand{\theoremname}{Theorem}
\begin{document}

\title{Maximal Coherence and the Resource Theory of Purity}

\author{Alexander Streltsov}

\email{streltsov.physics@gmail.com}

\affiliation{Faculty of Applied Physics and Mathematics, Gda\'{n}sk University
of Technology, 80-233 Gda\'{n}sk, Poland}

\affiliation{National Quantum Information Centre in Gda\'{n}sk, 81-824 Sopot,
Poland}

\affiliation{Dahlem Center for Complex Quantum Systems, Freie Universität Berlin,
D-14195 Berlin, Germany}

\author{Hermann Kampermann}

\affiliation{Institut für Theoretische Physik III, Heinrich-Heine-Universität
Düsseldorf, D-40225 Düsseldorf, Germany}

\author{Sabine Wölk}

\affiliation{Institute for Theoretical Physics, University of Innsbruck, Technikerstraße
25, A-6020 Innsbruck, Austria}

\affiliation{Naturwissenschaftlich-Technische Fakultät, Universität Siegen, Walter-Flex-Str.
3, D-57072 Siegen, Germany}

\author{Manuel Gessner}

\affiliation{QSTAR, INO-CNR and LENS, Largo Enrico Fermi 2, I-50125 Firenze, Italy}

\affiliation{Istituto Nazionale di Ricerca Metrologica, Strada delle Cacce 91,
I-10135 Torino, Italy}

\author{Dagmar Bruß}

\affiliation{Institut für Theoretische Physik III, Heinrich-Heine-Universität
Düsseldorf, D-40225 Düsseldorf, Germany}
\begin{abstract}
The resource theory of quantum coherence studies the off-diagonal
elements of a density matrix in a distinguished basis, whereas the
resource theory of purity studies all deviations from the maximally
mixed state. We establish a direct connection between the two resource
theories, by identifying purity as the maximal coherence which is
achievable by unitary operations. The states that saturate this maximum
identify a universal family of maximally coherent mixed states. These
states are optimal resources under maximally incoherent operations,
and thus independent of the way coherence is quantified. For all distance-based
coherence quantifiers the maximal coherence can be evaluated exactly,
and is shown to coincide with the corresponding distance-based purity
quantifier. We further show that purity bounds the maximal amount
of entanglement and discord that can be generated by unitary operations,
thus demonstrating that purity is the most elementary resource for
quantum information processing.
\end{abstract}
\maketitle

\section{Introduction}

A number of different quantum features are considered as important
resources for applications of quantum information theory. Entanglement~\cite{PhysRevLett.78.2275,Bruss2002,Plenio2007,Horodecki2009},
quantum discord~\cite{Ollivier2002,Henderson2001,RevModPhys.84.1655,Streltsov2014,Adesso2016,Adesso2016b},
and quantum coherence~\cite{Aberg2006,Plenio2014,Levi2014,WinterResourceCoherence,Marvian2016,CoherenceResource}
have been identified as necessary ingredients for the successful implementation
of tasks, such as quantum cryptography~\cite{RevModPhys.74.145},
quantum algorithms \cite{NielsenChuang,RevModPhys.74.145} and quantum
metrology \cite{Giovannetti2011,Demkowicz2012,Varenna,Toth2014,PhysRevLett.112.210401}.
Quantum resources can be formally classified in the framework of resource
theories~\cite{Horodecki2013,SpekkensResourceTheory}, where the
state space is divided into free states and resource states. Moreover,
a set of free operations, which cannot turn a free state into a resource
state, is identified~\footnote{Throughout this paper, a quantum operation (or just ``operation'')
is used as a synonym for a completely positive trace-preserving map.}. The possibility of conversion between two resource states via free
operations is a central issue within a resource theory, as it introduces
a natural order of the resource states. A suitable measure for the
resource must be non-increasing under free operations. Equipped with
suitable measures, one is able to quantify the resource in any given
quantum state.

States that maximize such measures are called extremal resource states~\footnote{Here, the term ``extremal'' is not synonymous to extremal elements
of convex sets.}. Every quantum state can then be characterized by the minimal rate
of extremal resource states needed to create it (resource cost), or
the maximal rate for creating an extremal resource state from it (distillable
resource), using the free operations~\footnote{This has to be understood in the asymptotic setting, where ``rate''
means the asymptotic fraction of required (distilled) resource states
per copy of the desired (given) quantum state $\rho$. Widely used
examples for such asymptotic rates are entanglement cost and distillable
entanglement, we refer to Ref.~\cite{Plenio2007} for their formal
definition.}. A number of different resource theories have been developed in the
context of quantum information theory~\cite{Horodecki2013,SpekkensResourceTheory},
prominent examples being entanglement~\cite{PhysRevLett.78.2275,Bruss2002,Plenio2007,Horodecki2009}
and coherence~\cite{Aberg2006,Plenio2014,Levi2014,WinterResourceCoherence,Marvian2016,CoherenceResource}.

While the concept of coherence is basis-dependent by its very definition,
both entanglement and quantum discord are locally basis-independent.
However, entanglement and discord usually change if a global unitary
is applied. It is clear, however, that the unitary activation of these
resources must be limited in terms of some basis-independent quantity
of the initial quantum state. As we will show in rigorous quantitative
terms, this fundamental quantity is identified as purity. Specifically,
we show how purity can be used to establish quantum coherence by a
unitary operation. This further provides direct bounds on the amount
of entanglement and discord that can be reached by unitary operations,
since these quantities can be traced back to coherences in a specific
many-body basis. These results hold for all suitable distance-based
quantifiers.

A resource theory of purity was introduced in~\cite{Horodecki2003}
for the asymptotic limit of infinitely many copies of the quantum
state. The finite-copy scenario was considered more recently~\cite{GourResThOpPhysRep15}.
Our results relate both of these approaches directly to the resource
theory of coherence. In general, purity can be interpreted as the
maximal coherence, maximized over all unitaries. Depending on the
chosen coherence monotone, we recover either the asymptotic or the
finite-copy resource theory of purity, by maximizing over unitary
operations, or even more generally, over all unital operations. As
one of our main results, we are able to identify the states that maximize
any given coherence monotone for a fixed spectrum of the density matrix.
These states define a universal set of maximally coherent mixed states.
The coherence of these states can be evaluated exactly for any distance-based
coherence monotone, and is shown to coincide with its distance-based
purity.

\section{Resource theory of quantum coherence}

In the following, we recall the resource theory of coherence~\cite{Aberg2006,Plenio2014,Levi2014,WinterResourceCoherence,Marvian2016,CoherenceResource}
and then identify the family of maximally coherent mixed states. The
free states of this resource theory are called \emph{incoherent states},
these are states which are diagonal in a fixed basis $\{\ket{i}\}$,
i.e., 
\begin{equation}
\sigma=\sum_{i}p_{i}\ket{i}\!\bra{i}.
\end{equation}
The set of all incoherent states will be denoted by $\mathcal{I}$.
The definition of free operations is not unique, and several approaches
have been presented in the literature~\cite{Marvian2016,CoherenceResource}. 

The historically first and most general approach was suggested in~\cite{Aberg2006},
where the set of \emph{maximally incoherent operations} (MIO) was
considered. These are all operations which cannot create coherence,
i.e., 
\begin{equation}
\Lambda_{\mathrm{MIO}}[\sigma]\in\mathcal{I}
\end{equation}
 for any incoherent state $\sigma\in\mathcal{I}$. Another important
family is the set of \textit{incoherent operations} (IO)~\cite{Plenio2014}.
These are operations which admit a Kraus decomposition 
\begin{equation}
\Lambda_{\mathrm{IO}}[\rho]=\sum_{i}K_{i}\rho K_{i}^{\dagger}
\end{equation}
 with incoherent Kraus operators $K_{i}$, i.e., $K_{i}\ket{m}\sim\ket{n}$,
where the states $\ket{m}$ and $\ket{n}$ belong to the incoherent
basis. We also note that IO is a strict subset of MIO~\cite{WinterResourceCoherence,Chitambar2016,Chitambar2016b}
\begin{equation}
\mathrm{IO}\subset\mathrm{MIO},\label{eq:IOMIO}
\end{equation}
and the inclusion is strict even for single-qubit states~\cite{ChitambarPhysRevA.95.019902}.
While we will focus on the sets MIO and IO in this work, other relevant
sets of operations have been discussed in recent literature, based
on physical or mathematical considerations~\cite{Levi2014,WinterResourceCoherence,Marvian2016,Yadin2016,Chitambar2016,Chitambar2016b,ChitambarPhysRevA.95.019902,Marvian2016b,GenuineCoherence}.
An extension of quantum coherence to multipartite systems has also
been presented~\cite{Bromley2015,Streltsov2015}, which made it possible
to investigate the resource theory of coherence in distributed scenarios~\cite{Chitambar2016c,Ma2016,Streltsov2016,Chitambar2016d,Matera2016,Streltsov2015b}.
A review over alternative frameworks of coherence and their interpretation
can be found in~\cite{CoherenceResource}.

The amount of coherence in a given state can be quantified via \emph{coherence
monotones}. These are nonnegative functions $\coh$ which do not increase
under the corresponding set of free operations, i.e., for a MIO monotone
we have $\coh(\Lambda_{\mathrm{MIO}}[\rho])\leq\coh(\rho)$. Since
MIO is the most general set of free operations for any resource theory
of coherence, a MIO monotone is also a monotone in any other coherence
theory. An important example are distance-based coherence monotones:
\begin{align}
\coh(\rho)=\inf_{\sigma\in\mathcal{I}}D(\rho,\sigma),\label{eq:C}
\end{align}
where $D$ is a suitable distance on the space of quantum states.
Such quantifiers were studied in~\cite{Aberg2006,Plenio2014}, the
most prominent example being the relative entropy of coherence 
\begin{equation}
\coh_{\mathrm{r}}(\rho)=\min_{\sigma\in\mathcal{I}}S(\rho||\sigma)
\end{equation}
with the quantum relative entropy $S(\rho||\sigma)=\mathrm{Tr}[\rho\log_{2}\rho]-\mathrm{Tr}[\rho\log_{2}\sigma]$~\footnote{We note that the quantum relative entropy is not a distance in the
mathematical sense, as it is not symmetric and does not fulfill the
triangle inequality.}. Remarkably, this quantity admits a closed expression~\cite{Plenio2014}
and coincides with the distillable coherence under MIO and IO and
also with the coherence cost under MIO~\cite{WinterResourceCoherence}:
\begin{equation}
\coh_{\mathrm{r}}(\rho)=S(\Delta[\rho])-S(\rho).
\end{equation}
Here, $S(\rho)=-\mathrm{Tr}[\rho\log_{2}\rho]$ is the von Neumann
entropy and $\Delta[\rho]=\sum_{i}\braket{i|\rho|i}\ket{i}\!\bra{i}$
denotes dephasing in the incoherent basis.

For a general distance-based coherence quantifier as given in Eq.~(\ref{eq:C})
one usually considers nonnegative distances $D$ which are contractive
under any quantum operation $\Lambda$: 
\begin{align}
D(\Lambda[\rho],\Lambda[\sigma])\leq D(\rho,\sigma).\label{eq:contractivity}
\end{align}
Any such distance gives rise to a MIO monotone~\cite{Plenio2014,CoherenceResource}.
Examples for such distances are the relative Rényi entropy 
\begin{equation}
D_{\alpha}(\rho||\sigma)=\frac{1}{\alpha-1}\log_{2}\mathrm{Tr}[\rho^{\alpha}\sigma^{1-\alpha}]
\end{equation}
 and the quantum relative Rényi entropy 
\begin{equation}
D_{\alpha}^{\mathrm{q}}(\rho||\sigma)=\frac{1}{\alpha-1}\log_{2}\mathrm{Tr}[(\sigma^{\frac{1-\alpha}{2\alpha}}\rho\sigma^{\frac{1-\alpha}{2\alpha}})^{\alpha}].
\end{equation}
While $D_{\alpha}$ is contractive for $\alpha\in[0,2]$, the function
$D_{\alpha}^{\mathrm{q}}$ is contractive in the range $\alpha\in[\frac{1}{2},\infty]$~\cite{QuantRenyEntrGeneralTomamichel,Leditzky2016}.
We can now define a family of coherence monotones in the following
way: 
\begin{equation}
\coh_{\alpha}(\rho)=\begin{cases}
\inf_{\sigma\in\mathcal{I}}D_{\alpha}(\rho||\sigma) & \mathrm{for}\,\,0<\alpha<1,\\
\inf_{\sigma\in\mathcal{I}}D_{\alpha}^{\mathrm{q}}(\rho||\sigma) & \mathrm{for}\,\,\alpha>1.
\end{cases}\label{eq:Calpha}
\end{equation}
This quantity is a MIO monotone in the range $\alpha\in[0,\infty]$.
In the limit $\alpha\rightarrow1$ both functions $D_{\alpha}(\rho||\sigma)$
and $D_{\alpha}^{\mathrm{q}}(\rho||\sigma)$ coincide with the relative
entropy $S(\rho||\sigma)$. Coherence quantifiers of this type were
studied in~\cite{Chitambar2016,Chitambar2016b,RenyiEntrCohMeas16}.
A related approach based on Tsallis relative entropies has also been
investigated~\cite{Rastegin2016}.

Several MIO monotones have additional desirable properties such as
strong monotonicity under IO and convexity~\cite{Plenio2014,CoherenceResource}.
This is in particular the case for the relative entropy of coherence~\cite{Plenio2014}.
While any MIO monotone is also an IO monotone, the other direction
is less clear. In particular, the $l_{1}$-norm of coherence 
\begin{equation}
\coh_{l_{1}}(\rho)=\sum_{i\neq j}|\rho_{ij}|
\end{equation}
 is known to be an IO monotone~\cite{Plenio2014}, but violates monotonicity
under MIO~\cite{Bu2016}. Another IO monotone which is not a MIO
monotone is the coherence of formation 
\begin{equation}
\coh_{\mathrm{f}}(\rho)=\min\sum_{i}p_{i}S(\Delta[\ket{\psi_{i}}\!\bra{\psi_{i}}]),
\end{equation}
where the minimum is taken over all pure state decompositions $\{p_{i},\ket{\psi_{i}}\}$
of the state $\rho$ \cite{Yuan2015,WinterResourceCoherence,Hu2016}.

We also note that coherence of formation is equal to coherence cost
under IO~\cite{WinterResourceCoherence}, and $l_{1}$-norm of coherence
is related to the path information in multi-path interferometer~\cite{Bagan2016,Bera2015}.

\section{Maximally coherent mixed states}

Since coherence is a basis-dependent concept, a unitary operation
will in general change the amount of coherence in a given state. In
the following, we will focus on the question: which unitary maximizes
the coherence of a given state $\rho$? The corresponding figure of
merit is given as follows: 
\begin{align}
\coh_{\max}(\rho):=\sup_{U}\coh(U\rho U^{\dagger}).\label{eq:Cmax}
\end{align}
If the supremum in Eq.~(\ref{eq:Cmax}) is realized for the unitary
$V$, the corresponding state $\rho_{\max}=V\rho V^{\dagger}$ will
be called \emph{maximally coherent mixed state}. This definition is
in full analogy to maximally entangled mixed states investigated in~\cite{PhysRevA.62.022310,MaxEntMixed2QubitStatesPRA01,PhysRevA.64.030302,PhysRevA.67.022110}.
Maximally coherent mixed states were first introduced for specific
measures of coherence in~\cite{Singh2015}, and studied further more
recently in~\cite{Yao2016}.

While the relative entropy of coherence admits a closed formula, the
evaluation of general coherence monotones is considered as a hard
problem~\cite{CoherenceResource}. It is thus reasonable to believe
that the maximization in Eq.~(\ref{eq:Cmax}) is out of reach. Quite
surprisingly, we will now show that the supremum in Eq.~(\ref{eq:Cmax})
can be evaluated in a large number of relevant scenarios. In particular,
we will see that there exists a \emph{universal} maximally coherent
mixed state, which does not depend on the particular choice of coherence
monotone. These results will also lead us to a closed expression of
$\coh_{\max}$ for all distance-based coherence monotones. 
\begin{thm}
\label{thm:MCM} Among all states $\rho$ with a fixed spectrum $\{p_{n}\}$,
the state 
\begin{align}
\rho_{\max}=\sum_{n=1}^{d}p_{n}\ket{n_{+}}\!\bra{n_{+}},\label{eq:MCM}
\end{align}
is a maximally coherent mixed state with respect to any MIO monotone.
Here, $\{\ket{n_{+}}\}$ denotes a mutually unbiased basis with respect
to the incoherent basis $\{\ket{i}\}$, i.e., $|\!\braket{i|n_{+}}\!|^{2}=\frac{1}{d}$,
where $d$ is the dimension of the Hilbert space. \end{thm}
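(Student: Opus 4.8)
The plan is to establish the equivalent and slightly stronger statement that $\coh(\rho_{\max})\ge\coh(\rho')$ for \emph{every} state $\rho'$ with spectrum $\{p_{n}\}$. This suffices because any two isospectral density operators are related by a unitary, so the orbit $\{U\rho U^{\dagger}\}$ entering the definition of $\coh_{\max}$ in Eq.~(\ref{eq:Cmax}) is exactly the set of all such $\rho'$; showing that $\rho_{\max}$ dominates this whole set means it attains the supremum and is therefore a maximally coherent mixed state. The mechanism I would use is MIO monotonicity: for each target $\rho'$ I would construct a maximally incoherent operation $\Lambda$ with $\Lambda[\rho_{\max}]=\rho'$, whereupon $\coh(\rho')=\coh(\Lambda[\rho_{\max}])\le\coh(\rho_{\max})$ holds for every MIO monotone at once.

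First I would fix the channel explicitly. Writing the target in its eigenbasis as $\rho'=\sum_{n}p_{n}\proj{\phi_{n}}$ with $\{\ket{\phi_{n}}\}$ orthonormal, I define Kraus operators $K_{n}=\ketbra{\phi_{n}}{n_{+}}$ and set $\Lambda[\rho]=\sum_{n}K_{n}\rho K_{n}^{\dagger}$. Trace preservation is immediate from orthonormality of the mutually unbiased basis, since $\sum_{n}K_{n}^{\dagger}K_{n}=\sum_{n}\proj{n_{+}}$ is the identity, and the map does the required job on the source: $\Lambda[\rho_{\max}]=\sum_{n}\bra{n_{+}}\rho_{\max}\ket{n_{+}}\proj{\phi_{n}}=\sum_{n}p_{n}\proj{\phi_{n}}=\rho'$, where $\bra{n_{+}}\rho_{\max}\ket{n_{+}}=p_{n}$ again uses orthonormality of $\{\ket{n_{+}}\}$.

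The decisive step --- the one that makes the construction legitimate rather than an arbitrary channel --- is checking that $\Lambda$ is MIO. This is exactly where the mutual unbiasedness is used: for any incoherent basis state one finds $\Lambda[\proj{i}]=\sum_{n}\abs{\braket{n_{+}|i}}^{2}\proj{\phi_{n}}=\frac{1}{d}\sum_{n}\proj{\phi_{n}}=\frac{1}{d}\sum_{i}\proj{i}$, since $\abs{\braket{n_{+}|i}}^{2}=\frac{1}{d}$ and $\{\ket{\phi_{n}}\}$ is a basis. By linearity every incoherent $\sigma\in\mathcal{I}$ is likewise sent to the maximally mixed state $\frac{1}{d}\sum_{i}\proj{i}$, which is diagonal in the incoherent basis and hence incoherent. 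Therefore $\Lambda[\mathcal{I}]\subseteq\mathcal{I}$, so $\Lambda$ is a maximally incoherent operation, the monotonicity bound above applies, and since $\rho'$ was arbitrary, $\rho_{\max}$ is a maximally coherent mixed state for any MIO monotone.

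I expect the principal pitfall to be the orientation of the data-processing step: the channel must run \emph{from} $\rho_{\max}$ \emph{to} the comparison state $\rho'$, so that monotonicity caps $\coh(\rho')$ from above; a channel in the opposite direction would prove the wrong inequality. The reason $\rho_{\max}$ plays the privileged role is structural --- its eigenvectors form a basis mutually unbiased to $\{\ket{i}\}$, which is precisely what collapses all incoherent inputs to $\frac{1}{d}\sum_{i}\proj{i}$ and thereby guarantees the incoherence-preserving property, while the coherent eigenstructure remains free to be unitarily steered onto any desired target basis $\{\ket{\phi_{n}}\}$. Everything else reduces to the short linear-algebra verifications indicated above.
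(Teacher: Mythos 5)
Your proposal is correct and is essentially the paper's own proof: your Kraus operators $K_{n}=\ketbra{\phi_{n}}{n_{+}}$ are exactly the paper's $K_{n}=U\ket{n_{+}}\!\bra{n_{+}}$ with $U=\sum_{m}\ketbra{\phi_{m}}{m_{+}}$, and both arguments hinge on the same observation that mutual unbiasedness collapses every incoherent state to $\openone/d$, certifying the channel as MIO so that monotonicity gives $\coh(\rho')\leq\coh(\rho_{\max})$. Parametrizing by the target state $\rho'$ rather than by a general unitary $U$ is a purely cosmetic difference, since the unitary orbit of $\rho_{\max}$ is precisely the set of isospectral states.
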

\begin{proof}
\noindent We will actually prove an even stronger statement. In particular,
we will show that for any unitary $U$, the transformation $\rho_{\max}\rightarrow U\rho_{\max}U^{\dagger}$
can be achieved via MIO, i.e., 
\begin{align}
\Lambda_{\mathrm{MIO}}[\rho_{\max}]=U\rho_{\max}U^{\dagger}.
\end{align}
The proof of the theorem then follows by using monotonicity of $\coh$
under MIO: 
\begin{equation}
\coh(U\rho_{\max}U^{\dagger})=\coh(\Lambda_{\mathrm{MIO}}[\rho_{\max}])\leq\coh(\rho_{\max}).
\end{equation}
The operation $\Lambda_{\mathrm{MIO}}$ which achieves this transformation
has Kraus operators $K_{n}=U\ket{n_{+}}\bra{n_{+}}$. Note that bases
$\{\ket{n_{+}}\}$ and $\{\ket{i}\}$ are mutually unbiased, which
implies that $\sum_{n}K_{n}\sigma K_{n}^{\dagger}=\openone/d$ for
any incoherent state $\sigma$. This means that the operation $\Lambda_{\mathrm{MIO}}[\rho]=\sum_{n}K_{n}\rho K_{n}^{\dagger}$
is indeed maximally incoherent. In the final step, note that $\sum_{n}K_{n}\rho_{\max}K_{n}^{\dagger}=U\rho_{\max}U^{\dagger}$,
and the proof is complete.
\end{proof}
This theorem has several important implications. First, it implies
that the state $\rho_{\max}$ is a resource with respect to all states
with the same spectrum. Second, this theorem provides an alternative
simple proof for the fact that $l_{1}$-norm of coherence can increase
under MIO~\cite{Bu2016}. This can be seen by combining Theorem~\ref{thm:MCM}
with the fact that the state in Eq.~(\ref{eq:MCM}) is not a maximally
coherent mixed state for the $l_{1}$-norm of coherence~\cite{Yao2016}.
Moreover, a unitary $V$ for an arbitrary state $\rho$ which achieves
the supremum in Eq.~(\ref{eq:Cmax}) for any MIO monotone is given
by $V=\sum_{n=1}^{d}\ket{n_{+}}\!\bra{\psi_{n}}$, where $\{\ket{\psi_{n}}\}$
are the eigenstates of $\rho$.

We will now go one step further and give an explicit expression for
$\coh_{\max}$ for any distance-based coherence monotone. 
\begin{thm}
\label{thm:Cmax}For any distance-based coherence monotone as given
in Eq.~(\ref{eq:C}) with a contractive distance $D$ the following
equality holds:\emph{ 
\begin{equation}
\coh_{\max}(\rho)=\coh(\rho_{\max})=D\left(\rho,\openone/d\right).
\end{equation}
}
\end{thm}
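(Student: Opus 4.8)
The plan is to split the claimed chain
$\coh_{\max}(\rho)=\coh(\rho_{\max})=D(\rho,\openone/d)$ into its two equalities and
handle them by different means: the first by invoking
Theorem~\ref{thm:MCM}, the second purely from contractivity~(\ref{eq:contractivity})
of $D$. For the first equality, I would note that every state in the unitary orbit
$U\rho U^{\dagger}$ carries the spectrum $\{p_n\}$ of $\rho$, and that $\rho_{\max}$
itself belongs to this orbit, since $\rho_{\max}=V\rho V^{\dagger}$ for the unitary
$V=\sum_n\ketbra{n_+}{\psi_n}$ built from the eigenstates $\{\ket{\psi_n}\}$ of $\rho$.
Theorem~\ref{thm:MCM} then says $\rho_{\max}$ maximizes the MIO monotone $\coh$ over
precisely this orbit, so $\coh_{\max}(\rho)=\sup_U\coh(U\rho U^{\dagger})=\coh(\rho_{\max})$.

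For the second equality I would first reduce the reference state to the maximally mixed
one. Applying contractivity to a unitary channel and to its inverse forces
$D(U\rho U^{\dagger},U\sigma U^{\dagger})=D(\rho,\sigma)$, i.e. $D$ is unitarily invariant;
since $V(\openone/d)V^{\dagger}=\openone/d$, this gives
$D(\rho_{\max},\openone/d)=D(\rho,\openone/d)$. Hence it remains to prove
$\coh(\rho_{\max})=D(\rho_{\max},\openone/d)$. The upper bound is immediate, because
$\openone/d$ is diagonal in the incoherent basis, so $\openone/d\in\mathcal I$ and
$\coh(\rho_{\max})=\inf_{\sigma\in\mathcal I}D(\rho_{\max},\sigma)\le D(\rho_{\max},\openone/d)$.

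The substance of the argument is the matching lower bound, namely that $\openone/d$ is
actually the closest incoherent state to $\rho_{\max}$. I would exploit mutual
unbiasedness by introducing the pinching channel in the $\{\ket{n_+}\}$ basis,
$\Delta_+[\cdot]=\sum_n\ketbra{n_+}{n_+}(\cdot)\ketbra{n_+}{n_+}$, which is a legitimate
quantum operation. Since $\rho_{\max}$ is diagonal in this basis it is left invariant,
$\Delta_+[\rho_{\max}]=\rho_{\max}$, whereas for any incoherent $\sigma=\sum_i q_i\ketbra{i}{i}$
the diagonal entries are $\bra{n_+}\sigma\ket{n_+}=\sum_i q_i|\!\braket{n_+|i}\!|^{2}=1/d$,
so $\Delta_+[\sigma]=\openone/d$. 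Contractivity of $D$ under $\Delta_+$ then yields
$D(\rho_{\max},\sigma)\ge D(\Delta_+[\rho_{\max}],\Delta_+[\sigma])=D(\rho_{\max},\openone/d)$
for every $\sigma\in\mathcal I$, which is exactly the desired lower bound.

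Combining the two bounds gives
$\coh(\rho_{\max})=D(\rho_{\max},\openone/d)=D(\rho,\openone/d)$, and together with the
first equality this closes the chain. I expect the lower bound to be the only nontrivial
step: the whole argument hinges on producing a single channel that simultaneously fixes
$\rho_{\max}$ and collapses the entire incoherent set to $\openone/d$, and the mutually
unbiased structure of $\{\ket{n_+}\}$ relative to $\{\ket{i}\}$ is precisely what makes
the $\{\ket{n_+}\}$-pinching do both at once.
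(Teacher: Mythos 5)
Your proposal is correct and its technical core---the pinching channel $\Delta_{+}$ in the mutually unbiased basis, which fixes $\rho_{\max}$ while collapsing every incoherent state to $\openone/d$, combined with contractivity---is exactly the argument in the paper's Appendix~\ref{sec:Proof-2}. The only differences are organizational: you invoke Theorem~\ref{thm:MCM} for the first equality where the paper instead sandwiches both equalities between the bounds $\coh_{\max}(\rho)\le D(\rho,\openone/d)$ and $\coh(\rho_{\max})\ge D(\rho,\openone/d)$, and your version of the lower bound, which runs over every $\sigma\in\mathcal{I}$ rather than a presumed closest state $\tau$, is marginally cleaner since it does not assume the infimum is attained.
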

\noindent We refer to Appendix~\ref{sec:Proof-2} for the proof.
Note that Theorem~\ref{thm:Cmax} also holds for all coherence quantifiers
\begin{equation}
\coh_{p}=\min_{\sigma\in\mathcal{I}}||\rho-\sigma||_{p}
\end{equation}
 based on Schatten $p$-norms $||M||_{p}=(\mathrm{Tr}[(M^{\dagger}M)^{p/2}])^{1/p}$
for all $p\geq1$. Equipped with these results, we will show below
in this paper that the resource theory of coherence is closely related
to the resource theory of purity. Before we present these results,
we review the main properties of the resource theory of purity in
the following.

\section{\label{sec:ResourceTheories}Resource theory of purity}

We will now review resource theories of purity based on different
sets of free operations. The discussion summarizes results previously
presented in \cite{GourResThOpPhysRep15}. There exists a hierarchy
of quantum operations which generalize classical bistochastic (purity
non-increasing) maps. We distinguish three types of quantum operations: 
\begin{itemize}
\item Mixture of unitary operations: 
\begin{align}
\Lambda_{\mathrm{MU}}[\rho]=\sum_{i}p_{i}U_{i}\rho U_{i}^{\dag},
\end{align}
with $p_{i}\geq0$, $\sum_{i}p_{i}=1$, and unitary operations $U_{i}$.
\item Noisy operations: 
\begin{align}
\Lambda_{\mathrm{NO}}[\rho]=\Tr_{E}\left[U\left(\rho\otimes\openone_{E}/d\right)U^{\dag}\right],
\end{align}
with a unitary operation $U$. 
\item Unital operations: 
\begin{align}
\Lambda_{\mathrm{U}}[\openone/d]=\openone/d,
\end{align}
i.e., operations which preserve the maximally mixed state.
\end{itemize}
Note that in contrast to the discussion in~\cite{GourResThOpPhysRep15},
we only consider operations which preserve the dimension of the Hilbert
space. It turns out that these operations form a subset hierarchy
\begin{align}
\left\{ \Lambda_{\mathrm{MU}}\right\} \subset\left\{ \Lambda_{\mathrm{NO}}\right\} \subset\left\{ \Lambda_{\mathrm{U}}\right\} ,
\end{align}
see, e.g., Lemma 5 in Ref.~\cite{GourResThOpPhysRep15}.

We call two resource theories equivalent if their respective sets
of free states, as well as their sets of all states coincide, and
additionally if for each $\Lambda_{1}$ with $\Lambda_{1}(\rho)=\sigma$
there exists a $\Lambda_{2}$, such that $\Lambda_{2}(\rho)=\sigma$,
where $\Lambda_{1}$ $(\Lambda_{2})$ is a free operation of resource
theory 1 (2). Due to Lemma 10 in \cite{GourResThOpPhysRep15} the
state conversion abilities are equivalent for the three cases $\Lambda_{\mathrm{MU}}$,
$\Lambda_{\mathrm{NO}}$, and $\Lambda_{\mathrm{U}}$. It follows
that any resource theory which only deviates in the type of operations
as defined above will be equivalent. If not stated otherwise, we will
consider the resource theory of purity based on unital operations
$\Lambda_{\mathrm{U}}$ in the following.

Within the resource theory of purity, the state conversion possibilities
follow from the classical theory of bistochastic maps \cite{UhlmannStochBook,GourResThOpPhysRep15},
using the concept of majorization. A state $\rho$ majorizes another
state $\sigma$, i.e., $\rho\succ\sigma$, if their spectra are in
majorization order: 
\begin{align}
\sum\limits _{i=1}^{k}\lambda_{i}^{\downarrow}(\rho)\geq\sum\limits _{j=1}^{k}\lambda_{j}^{\downarrow}(\sigma)
\end{align}
for all $k\geq1$. Here, $\lambda_{i}^{\downarrow}(\rho)$ denotes
the eigenvalues of $\rho$ in non-increasing order. The aforementioned
relation to the resource theory of purity is established via the following
Lemma. 
\begin{lem}
\label{lem:unital}Given two states $\rho$ and $\sigma$ of the same
dimension, $\rho$ can be converted into $\sigma$ via some unital
operation $\Lambda_{\mathrm{U}}$ if and only if $\rho$ majorizes
$\sigma$: 
\begin{align}
\Lambda_{\mathrm{U}}[\rho]=\sigma\Leftrightarrow\rho\succ\sigma.
\end{align}

\end{lem}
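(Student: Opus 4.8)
The plan is to reduce this quantum statement to the classical Hardy--Littlewood--P\'olya theorem, which says that a probability vector $r$ majorizes $s$ precisely when $s=Dr$ for some doubly stochastic matrix $D$. In both directions the bridge is a single doubly stochastic matrix assembled from the overlaps between the eigenbases of $\rho$ and $\sigma$; unitality and trace preservation are exactly the two properties that make this matrix doubly stochastic.

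For the forward implication ($\Lambda_{\mathrm{U}}[\rho]=\sigma\Rightarrow\rho\succ\sigma$), I would write the spectral decompositions $\rho=\sum_i r_i\proj{a_i}$ and $\sigma=\sum_j s_j\proj{b_j}$ and introduce
\[
D_{ji}=\bra{b_j}\Lambda_{\mathrm{U}}[\proj{a_i}]\ket{b_j}.
\]
First I would verify that $D$ is doubly stochastic: its entries are nonnegative because $\Lambda_{\mathrm{U}}$ is positive; the sum over $j$ equals $\mathrm{Tr}[\Lambda_{\mathrm{U}}[\proj{a_i}]]=1$ because $\Lambda_{\mathrm{U}}$ is trace preserving; and the sum over $i$ equals $\bra{b_j}\Lambda_{\mathrm{U}}[\openone]\ket{b_j}=1$ because $\Lambda_{\mathrm{U}}$ is unital (so $\Lambda_{\mathrm{U}}[\openone]=\openone$). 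A short computation then gives
\[
s_j=\bra{b_j}\sigma\ket{b_j}=\bra{b_j}\Lambda_{\mathrm{U}}[\rho]\ket{b_j}=\sum_i D_{ji}\,r_i,
\]
so $s=Dr$ and Hardy--Littlewood--P\'olya yields $\rho\succ\sigma$.

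For the converse ($\rho\succ\sigma\Rightarrow$ existence of a unital $\Lambda_{\mathrm{U}}$), I would run the chain backwards. Majorization provides a doubly stochastic $D$ with $s=Dr$, and by Birkhoff's theorem $D=\sum_k\lambda_k P_k$ is a convex combination of permutation matrices. Let $W$ and $V$ be the unitaries mapping the standard basis onto the eigenbases $\{\ket{a_i}\}$ and $\{\ket{b_j}\}$. Then $W^{\dagger}\rho W=\mathrm{diag}(r)$; the mixed-unitary channel $\mathcal{M}[\cdot]=\sum_k\lambda_k U_{P_k}(\cdot)U_{P_k}^{\dagger}$, with $U_{P_k}$ the permutation unitaries, sends $\mathrm{diag}(r)$ to $\mathrm{diag}(Dr)=\mathrm{diag}(s)$; and conjugation by $V$ rotates $\mathrm{diag}(s)$ into $\sigma$. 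The composition $\Lambda[\cdot]=V\,\mathcal{M}[W^{\dagger}(\cdot)W]\,V^{\dagger}=\sum_k\lambda_k(VU_{P_k}W^{\dagger})(\cdot)(VU_{P_k}W^{\dagger})^{\dagger}$ is a mixture of unitaries, hence unital, and it maps $\rho$ to $\sigma$.

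The step I expect to be the crux is the forward direction, specifically the recognition that the overlap matrix $D_{ji}$ is \emph{genuinely} doubly stochastic: both families of sums must equal one, with trace preservation controlling one and unitality the other, so it is precisely here that the hypothesis ``unital'' (rather than merely trace preserving) is used. The converse is more routine once Birkhoff's decomposition is in hand, since it only assembles known unital maps; I would also note that it constructs a mixture of unitaries, which confirms that $\Lambda_{\mathrm{MU}}$ already suffices and is consistent with the equivalence of $\Lambda_{\mathrm{MU}}$, $\Lambda_{\mathrm{NO}}$, and $\Lambda_{\mathrm{U}}$ recorded above.
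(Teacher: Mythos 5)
Your proof is correct. The paper does not prove this lemma itself but defers to Theorem~4.1.1 of~\cite{NielsenLectureMajorization} (and~\cite{Uhlmann1970}), and your argument is precisely that standard proof: the overlap matrix $D_{ji}=\bra{b_j}\Lambda_{\mathrm{U}}[\proj{a_i}]\ket{b_j}$ is doubly stochastic (trace preservation giving the column sums, unitality the row sums) so Hardy--Littlewood--P\'olya yields the forward direction, while Birkhoff's theorem assembles a mixture of unitaries for the converse --- which, as you note, also recovers the equivalence of $\Lambda_{\mathrm{MU}}$ and $\Lambda_{\mathrm{U}}$ for state conversion that the paper cites separately as Lemma~10 of~\cite{GourResThOpPhysRep15}.
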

\noindent For the proof of this Lemma we refer to Theorem 4.1.1 in~\cite{NielsenLectureMajorization}
(see also \cite{Uhlmann1970}). Due to the arguments mentioned above,
it follows that the majorization relation is necessary and sufficient
for state conversion via any set of operations presented above.

Fundamental questions in any resource theory address the number of
extremal resource states that can be distilled from a state $\rho$.
In the case of purity this poses the question, how many copies of
a pure single-qubit state $\ket{\psi}_{2}$ can one extract via unital
operations? We will call the corresponding figure of merit \emph{single-shot
distillable purity}. Its formal definition can be given as follows~\footnote{Note that the dimensions $d_{1}$ and $d_{2}$ in Eq.~(\ref{eq:distillablepurity})
are arbitrary finite numbers, up to the requirement that $d\times d_{2}=2^{m}\times d_{1}$.
This guarantees that the unital operation preserves the dimension
of the Hilbert space. As we show in Appendix~\ref{sec:Proof-4},
the optimal choice is $d_{1}=d$ and $d_{2}=2^{\left\lfloor \log_{2}(d/r)\right\rfloor }$,
where $r$ is the rank of $\rho$. This only applies if $\log_{2}(d/r)\geq1$,
as single-shot purity distillation does not work otherwise. By similar
considerations, the optimal choice of dimensions in Eq.~(\ref{eq:puritycost})
is $d_{1}=d$ and d$_{2}=2^{\left\lceil \log_{2}(d\lambda_{\max})\right\rceil }$,
where $\lambda_{\max}$ is the maximal eigenvalue of $\rho$, see
Appendix~\ref{sec:Proof-5} for more details.}: 
\begin{align}
\pur_{\mathrm{d}}^{1}(\rho) & =\max\!\left\{ m:\exists\:\Lambda_{\mathrm{U}}\text{, s.t. }\Lambda_{\mathrm{U}}\!\left[\rho\otimes\frac{\openone}{d_{2}}\right]=\psi_{2}^{\otimes m}\otimes\frac{\openone}{d_{1}}\right\} ,\label{eq:distillablepurity}
\end{align}
where $\Lambda_{\mathrm{U}}$ is a unital operation, $\psi_{2}=\proj{\psi}_{2}$
is a pure single-qubit state, and $\openone/d_{i}$ is a maximally
mixed state of dimension $d_{i}$. Correspondingly, we define the
\textit{single-shot purity cost} as the minimal number of pure single-qubit
states which are required to create the state $\rho$ via unital operations:
\begin{align}
\pur_{\mathrm{c}}^{1}(\rho) & =\min\!\left\{ m:\exists\:\Lambda_{\mathrm{U}}\text{, s.t. }\Lambda_{\mathrm{U}}\!\left[\psi_{2}^{\otimes m}\otimes\frac{\openone}{d_{1}}\right]=\rho\otimes\frac{\openone}{d_{2}}\right\} .\label{eq:puritycost}
\end{align}

Similar quantities were first studied in the asymptotic limit in~\cite{Horodecki2003},
allowing for infinitely many copies of a quantum state and a finite
error margin that only vanishes in this limit. It was found that in
the asymptotic case, the distillable purity and the purity cost coincide,
and are both equal to the relative entropy of purity 
\begin{equation}
\pur_{\mathrm{r}}(\rho)=\log_{2}d-S(\rho).
\end{equation}
The single-copy scenario was considered in \cite{GourResThOpPhysRep15},
under the label of ``nonuniformity''. There the Rényi $\alpha$-purities
were identified as figures of merit using an approach based on Lorentz
curves. We will discuss these results in more detail in the following,
with particular focus on the resource theory of coherence.

\section{\noindent Relation between the resource theories of purity and coherence}

Following established notions from the resource theories of entanglement~\cite{PhysRevLett.78.2275,Bruss2002,Plenio2007,Horodecki2009}
and coherence~\cite{Aberg2006,Plenio2014,Levi2014,WinterResourceCoherence,Marvian2016,CoherenceResource},
we will now introduce a framework for purity quantification. In particular,
we distinguish between \textit{purity monotones} and \textit{purity
measures}. Any purity monotone $\pur$ should fulfill the following
two requirements. 
\begin{lyxlist}{00.00.0000}
\item [{(P1)}] \textit{Nonnegativity}: $\pur$ is nonnegative and vanishes
for the state $\openone/d$. 
\item [{(P2)}] \textit{Monotonicity}: $\pur$ does not increase under unital
operations, i.e., $\pur(\Lambda_{\mathrm{U}}[\rho])\leq\pur(\rho)$
for any unital operation $\Lambda_{\mathrm{U}}$. 
\end{lyxlist}
Similar as in the resource theories of entanglement and coherence,
we regard these two properties as the most fundamental for any quantity
which aims to capture the performance of some purity-based task. Purity
measures will be monotones with the following additional properties. 
\begin{lyxlist}{00.00.0000}
\item [{(P3)}] \textit{Additivity}: $\pur(\rho\otimes\sigma)=\pur(\rho)+\pur(\sigma)$
for any two states $\rho$ and $\sigma$. 
\item [{(P4)}] \textit{Normalization}: $\pur(\ket{\psi}_{d})=\log_{2}d$
for all pure states $\ket{\psi}_{d}$ of dimension $d$. 
\end{lyxlist}
A purity monotone/measure $\pur$ is further convex if it fulfills
$\sum_{i}p_{i}\pur(\rho_{i})\geq\pur(\sum_{i}p_{i}\rho_{i})$. We
note that purity monotones have also been previously studied in~\cite{GourResThOpPhysRep15}.

We can now introduce a family of coherence-based purity monotones
as follows: 
\begin{equation}
\pur_{\coh}(\rho):=\sup_{\Lambda_{\mathrm{U}}}\coh(\Lambda_{\mathrm{U}}[\rho]),\label{eq:PC}
\end{equation}
where the supremum is taken over all unital operations $\Lambda_{\mathrm{U}}$
and $\coh$ is an arbitrary MIO monotone. Clearly, $\pur_{\coh}$
is nonnegative, vanishes for $\openone/d$, and does not increase
under unital operations, i.e., it fulfills the requirements P1 and
P2 for a purity monotone. Remarkably, as we show in Appendix~\ref{sec:Proof-3},
for any MIO monotone $\coh$ the corresponding purity monotone can
be written as 
\begin{equation}
\pur_{\coh}(\rho)=\coh(\rho_{\max})\label{eq:PC-1}
\end{equation}
with the maximally coherent mixed state $\rho_{\max}$. If $\mathcal{C}$
is a distance-based coherence monotone with a contractive distance
$D$, we can apply Theorem~\ref{thm:Cmax} to write the corresponding
purity monotone explicitly as 
\begin{equation}
\pur_{D}(\rho)=D(\rho,\openone/d).\label{eq:PD}
\end{equation}
Eq.~(\ref{eq:PD}) represents a general distance-based purity quantifier,
in direct analogy to similar approaches for entanglement~\cite{PhysRevLett.78.2275,Bruss2002,Plenio2007,Horodecki2009},
coherence~\cite{Plenio2014,CoherenceResource}, and quantum discord~\cite{PhysRevLett.104.080501,RevModPhys.84.1655,Streltsov2014,Adesso2016,Adesso2016b}.
In contrast to these theories, a minimization over free states in
Eq.~(\ref{eq:PD}) is not necessary due to the uniqueness of the
free state in the resource theory of purity. For a single qubit the
relation between coherence and purity can be visualized on the Bloch
ball if coherence and purity are quantified via the trace norm, see
Fig.~\ref{fig:1}. 
\begin{figure}
\centering{}\includegraphics[width=0.58\columnwidth]{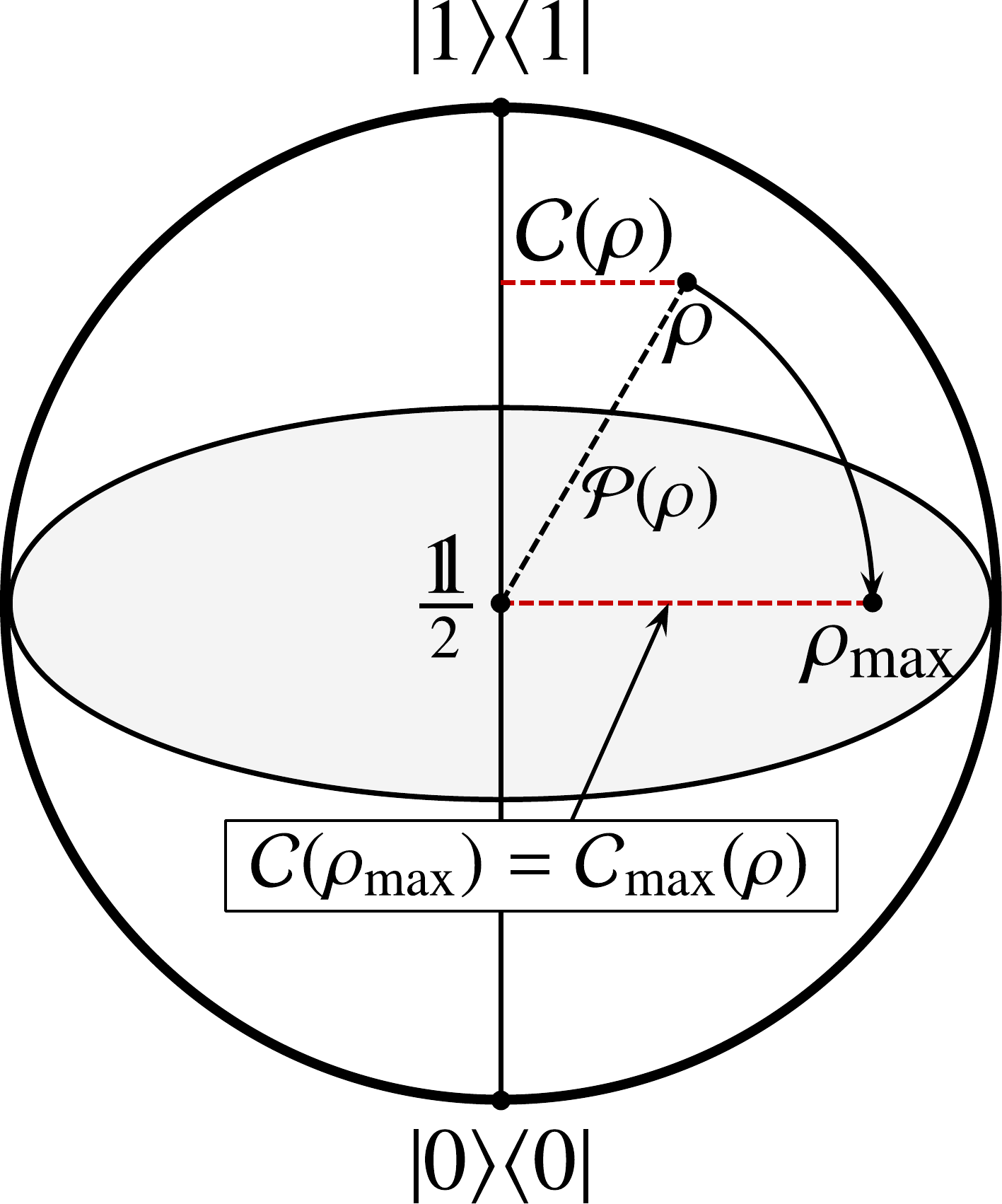}
\caption{\label{fig:1}Coherence and purity for a single qubit. The Bloch ball
of all single-qubit states contains the incoherent axis (line connecting
$\ket{0}\!\bra{0}$ and $\ket{1}\!\bra{1}$) and the maximally coherent
(equatorial) plane. If for a state $\rho$ coherence and purity are
quantified via the distance-based approach with the trace norm $||M||_{1}=\mathrm{Tr}\sqrt{M^{\dagger}M}$,
the corresponding amount of coherence $\coh$ (red dashed lines) and
purity $\pur$ (black dashed line) can be interpreted as the Euclidean
distance to the incoherent axis and the center of the Bloch ball,
respectively. The maximally coherent mixed state $\rho_{\max}$ can
be obtained from $\rho$ via a rotation onto the maximally coherent
plane.}
 
\end{figure}

Cases of particular interest can be derived from the coherence monotones
introduced in Eq.~(\ref{eq:Calpha}). In these cases, Eq.~(\ref{eq:PC})
leads to the \textit{Rényi $\alpha$-purity} 
\begin{align}
\pur_{\alpha}(\rho) & =\log_{2}d-S_{\alpha}(\rho)\label{eq:Palpha}
\end{align}
with the Rényi $\alpha$-entropy $S_{\alpha}(\rho)=\frac{1}{1-\alpha}\log_{2}(\Tr[\rho^{\alpha}])$.
This quantity was studied in \cite{GourResThOpPhysRep15}, and it
admits an operational interpretation in the resource theories of purity.
In particular, the single-shot distillable purity $\pur_{\mathrm{d}}^{1}$,
which was introduced in Eq.~(\ref{eq:distillablepurity}), can be
expressed in terms of $\pur_{\alpha}$ as follows: 
\begin{equation}
\pur_{\mathrm{d}}^{1}(\rho)=\left\lfloor \lim_{\alpha\rightarrow0}\pur_{\alpha}(\rho)\right\rfloor =\left\lfloor \log_{2}(d/r)\right\rfloor ,\label{eq:distillablepurity-1}
\end{equation}
where $r$ is the rank and $d$ is the dimension of $\rho$. Also
the single-shot purity cost $\pur_{\mathrm{c}}^{1}$, introduced in
Eq.~(\ref{eq:puritycost}), can be written in terms of $\pur_{\alpha}$
as follows: 
\begin{equation}
\pur_{\mathrm{c}}^{1}(\rho)=\left\lceil \lim_{\alpha\rightarrow\infty}\pur_{\alpha}(\rho)\right\rceil =\left\lceil \log_{2}(d\lambda_{\max})\right\rceil ,\label{eq:puritycost-1}
\end{equation}
with the maximum eigenvalue $\lambda_{\max}$ of $\rho$. These results
for single-shot purity distillation and dilution were first found
in~\cite{GourResThOpPhysRep15}; we present alternative proofs in
Appendix~\ref{sec:Proof-4} and \ref{sec:Proof-5}. Finally, 
\begin{equation}
\pur_{\mathrm{r}}(\rho)=\lim_{\alpha\rightarrow1}\pur_{\alpha}(\rho)=\log_{2}d-S(\rho)
\end{equation}
 is the \textit{\emph{relative entropy of purity}}. It coincides with
both the distillable purity and the purity cost in the asymptotic
limit, where the resource theory of purity becomes reversible~\cite{Horodecki2003}.

As is summarized in Appendix~\ref{sec:RenyiPurity}, $\pur_{\alpha}$
is a purity measure for all $\alpha\geq0$, i.e., it fulfills all
requirements P1-P4, and it is convex for $0\leq\alpha\leq1$. We further
note that $\pur_{\alpha}(\rho)\geq\pur_{\beta}(\rho)$ for $\alpha\geq\beta$,
since the Rényi entropy $S_{\alpha}$ is nonincreasing in $\alpha$~\cite{Bengtsson2007}.
Aside from the cases discussed before, another case of interest is
the Rényi $2$-purity $\pur_{2}(\rho)=\log_{2}(d\Tr{[\rho^{2}]})$,
a simple function of the linear purity $\Tr[\rho^{2}]$ \footnote{Notice that $\mathrm{Tr}[\rho^{2}]-1/d$ corresponds to the purity
monotone obtained from the squared Schatten 2-norm $||\rho-\openone/d||_{2}^{2}$.}, which can be directly measured by letting two copies of the state
$\rho$ interfere with each other~\cite{Ekert2002,Pichler2013}.
In this way, the purity of a composite system of ultracold bosonic
atoms in an optical lattice, as well as the purity of its subsystems,
have been determined experimentally~\cite{Islam2015}.

\section{Relation to entanglement\protect \\
and quantum discord}

Of particular interest for quantum information theory are non-classical
properties of correlated quantum states in multipartite systems \cite{NielsenChuang,Horodecki2009}.
Our results about purity have immediate consequences for quantities
such as entanglement and discord. Certain relations between entanglement
and purity have already been reported. Bipartite entangled states,
e.g., must have a linear purity above a threshold value of $\Tr[\rho^{2}]=1/(d-1)$,
with total dimension $d$, due to the existence of a finite-volume
set of separable states around the maximally mixed state~\cite{PhysRevA.58.883,PhysRevA.66.062311,Gurvits2003}.
Furthermore, a bound for entanglement can be provided by comparing
the purity of the composite system to the one of its subsystems \cite{Mintert2007,Islam2015}.
Similar investigations have also been performed for multipartite quantum
systems~\cite{GurvitsPhysRevA.72.032322,HildebrandPhysRevA.75.062330}.

In the following we focus on distance-based quantifiers for discord
$\dis$ and entanglement $\ent$, in analogy to Eqs.~(\ref{eq:C})
and~(\ref{eq:PD}). In a multipartite system these can be defined
as~\cite{PhysRevLett.78.2275,PhysRevLett.104.080501} 
\begin{align}
\dis(\rho) & =\inf_{\sigma\in\mathcal{Z}}\dist(\rho,\sigma),\\
\ent(\rho) & =\inf_{\sigma\in\mathcal{S}}\dist(\rho,\sigma),
\end{align}
where $\mathcal{Z}$ and $\mathcal{S}$ denote the sets of zero-discord
and separable states, respectively. The latter contains all convex
combinations of arbitrary product states $\rho_{1}\otimes\cdots\otimes\rho_{N}$,
whereas the set of zero-discord states can either be defined with
respect to a particular subsystem, or symmetrically with respect to
all subsystems. Here, we consider the symmetrical set $\mathcal{Z}$~\cite{RevModPhys.84.1655},
encompassing all convex combinations of pure, locally orthonormal
product states $|\varphi_{1}\rangle\langle\varphi_{1}|\otimes\cdots\otimes|\varphi_{N}\rangle\langle\varphi_{N}|$.
With this choice, $\dis(\rho)$ provides an upper bound for the non-symmetric
definitions of discord.

Our general distance-based approach leads to the following natural
ordering of resources: 
\begin{equation}
\pur(\rho)\geq\coh_{N}(\rho)\geq\dis(\rho)\geq\ent(\rho).
\end{equation}
Here, $\coh_{N}(\rho)=\inf_{\sigma\in\mathcal{I}_{N}}D(\rho,\sigma)$
denotes a coherence monotone with respect to an $N$-partite incoherent
product basis, where $\mathcal{I}_{N}$ is the set of $N$-partite
incoherent states~\cite{Bromley2015,Streltsov2015}. This hierarchy
holds true if purity, coherence, discord, and entanglement are defined
via the same distance $D$. In this case, the statement follows directly
by noting that $\openone/d\in\mathcal{I}_{N}\subset\mathcal{Z}\subset\mathcal{S}$,
see also Fig.~\ref{fig:2}. Based on the same argument, this hierarchy
can be easily extended beyond entanglement to include the concepts
of steering and non-locality~\cite{Wiseman2007,Adesso2016}. It holds
that $\dis(\rho)=\inf_{U_{N}}\coh_{N}(U_{N}\rho U_{N}^{\dag})$ with
product unitary $U_{N}$; this was pointed out for the relative entropy
in~\cite{PhysRevA.92.022112}, but holds for general distance-based
quantifiers. 
\begin{figure}
\includegraphics[width=1\columnwidth]{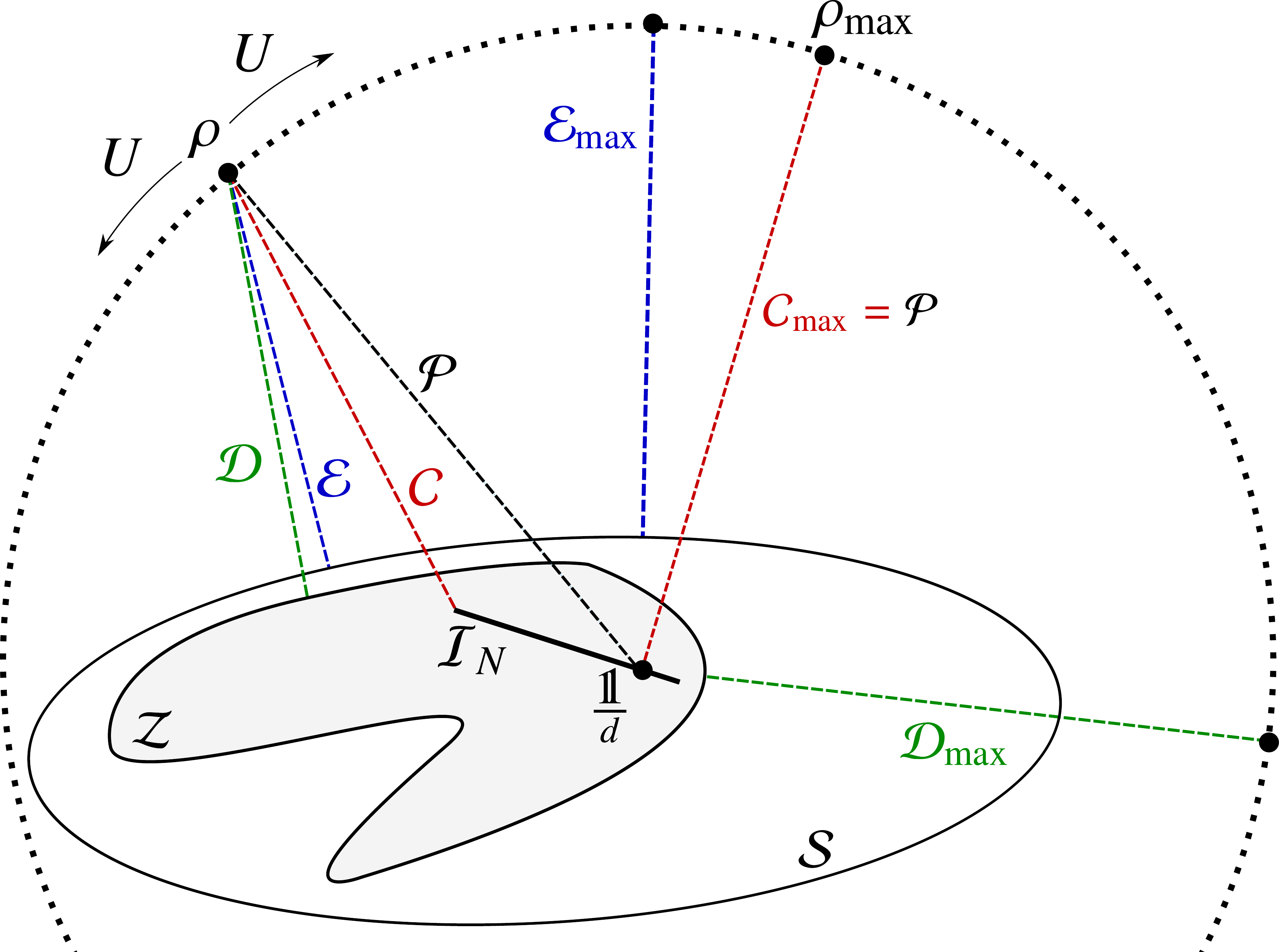}

\caption{\label{fig:2}Schematic representation of purity $\pur$ (black dashed
line), coherence $\coh$ (red dashed lines), discord $\dis$ (green
dashed lines), and entanglement $\ent$ (blue dashed lines) for distance-based
quantifiers of the corresponding framework. Zero-discord states $\mathcal{Z}$
are a nonconvex measure-zero subset of separable states $\mathcal{S}$.
Incoherent states $\mathcal{I}_{N}$ are a convex subset of $\mathcal{Z}$.
All sets contain the maximally mixed state $\openone/d$. The maximally
coherent mixed state $\rho_{\max}$ is obtained from $\rho$ via unitary
rotation (dotted circle).}

\end{figure}

A change of the \textit{global} basis, or equivalently, application
of a collective unitary operation can generate entanglement and discord.
As we will see below, the maximal achievable amount is again directly
bounded by the purity. For this we introduce $\dis_{\max}(\rho)=\sup_{U}\dis(U\rho U^{\dagger})$
and similarly $\ent_{\max}(\rho)=\sup_{U}\ent(U\rho U^{\dagger})$,
in analogy to Eq.~(\ref{eq:Cmax}). As we prove in Appendix~\ref{sec:Proof-6},
these quantities obey the following relation 
\begin{align}
\pur(\rho)=\coh_{\max}(\rho)\geq\dis_{\max}(\rho)\geq\ent_{\max}(\rho),\label{eq:maxhierarchy}
\end{align}
which is visualized in Fig.~\ref{fig:2}. States related to $\dis_{\max}$
and $\ent_{\max}$ have been studied for the two-qubit case. For instance,
the set of maximally entangled mixed states, i.e., states which maximize
entanglement for a fixed spectrum, as well as states that maximize
entanglement at a fixed value of purity, have been characterized for
various quantifiers of entanglement and purity \cite{PhysRevA.62.022310,MaxEntMixed2QubitStatesPRA01,PhysRevA.64.030302,PhysRevA.67.022110}.
States satisfying $\ent_{\max}(\rho)=0$ are also known as absolutely
separable states, and have been studied in~\cite{KusPhysRevA.63.032307,Jivulescu2015276,Filippov1367-2630-19-8-083010}.
Similar studies were performed for discord~\cite{MaxDiscMixed2Qubits,PhysRevA.83.032101},
based on the original definition \cite{Ollivier2002}. We also note
that the relative entropy of purity $\pur_{\mathrm{r}}$ coincides
with the maximal mutual information $I_{\max}(\rho)=\max_{U}I(U\rho U^{\dagger})$,
where $I(\rho)=S(\rho^{A})+S(\rho^{B})-S(\rho)$ is the mutual information
and both subsystems $A$ and $B$ have the same dimension $\sqrt{d}$~\cite{Jevtic2012}.
As a direct consequence of Theorem~\ref{thm:Cmax}, purity further
bounds the accessible entanglement under incoherent operations. This
is discussed in more detail in Appendix~\ref{sec:incoherentent}.

\section{Experimental relevance}

\noindent In well-controllable quantum systems, quantum states with
nearly maximal purity are usually easy to initialize but hard to maintain.
Especially large quantum systems suffer immensely from purity losses
due to noise. For example the linear purity $\Tr[\rho^{2}]$ of the
Greenberger-Horne-Zeilinger (GHZ) state decreases exponentially in
time under global phase noise with a decay proportional to the number
of particles squared~\cite{Monz2011}. A principal challenge of single
photon experiments is the creation of the temporal purity, which is
necessary for coherent interaction between photons of two independent
sources~\cite{Qian2016}.

In contrast to measures of entanglement, discord, or coherence, purity
measures are rather easily accessible in experiments~\cite{Ekert2002,Pichler2013,Islam2015}.
For example the Rényi $\alpha$-purities~(\ref{eq:Palpha}) are essentially
functions of the eigenvalue distribution $\{p_{i}\}$ of $\rho$.
Any von Neumann measurement of $\rho$ immediately provides a lower
bound for this distribution: if such a measurement is performed in
a basis $\{|\varphi_{i}\rangle\}$, the measurement outcomes are distributed
according to the probabilities $p'_{i}=\langle\varphi_{i}|\rho|\varphi_{i}\rangle$.
Any purity monotone that is evaluated on the basis of the $\{p'_{i}\}$
is a lower bound for the true purity of $\rho$ \footnote{This is a direct consequence of the fact that nonselective von Neumann
measurements are unital operations.}. The most easily accessible basis in experiments is the energy eigenbasis.
In this case the measured bound coincides with the purity for all
thermal states. By virtue of Theorem~\ref{thm:Cmax}, the measured
purity also naturally provides an experimental bound on the amount
of coherence, entanglement, and quantum discord.

\section{\noindent Conclusions }

As we have proven in this work, the resource theories of coherence
and purity are closely connected. This connection was established
by showing that any amount of purity can be converted into coherence
by means of a suitable unitary operation. We further provided a closed
expression for the optimal unitary operation, as well as the quantum
states that achieve the maximal coherence. Remarkably, this set of
maximally coherent mixed states is universal, i.e., these states maximize
all coherence monotones for a fixed spectrum. For any distance-based
coherence monotone the maximal coherence achievable via unitary operations
can be evaluated exactly, and is shown to coincide with the corresponding
distance-based purity monotone.

Based on these results, we defined a new family of coherence-based
purity monotones which admit a closed expression and an operational
interpretation in several relevant scenarios. We further proposed
a general framework for quantifying purity, following related approaches
for entanglement and coherence. This approach also provides quantitative
bounds on the required amount of purity to achieve certain levels
of entanglement and discord. Lower bounds for a large variety of purity
measures are easily accessible in experiments.
\begin{acknowledgments}
We thank Gerardo Adesso, Kaifeng Bu, Dario Egloff, Xueyuan Hu, Martin
Plenio, and Alexey Rastegin for discussions. We acknowledge financial
support by the Alexander von Humboldt-Foundation, Bundesministerium
für Bildung und Forschung, the National Science Center in Poland (POLONEZ
UMO-2016/21/P/ST2/04054), and the European Union's Horizon 2020 research
and innovation programme under the Marie Sk\l odowska-Curie grant
agreement No. 665778. 
\end{acknowledgments}

 \bibliographystyle{apsrev4-1}
\bibliography{bibfile}

%merlin.mbs apsrev4-1.bst 2010-07-25 4.21a (PWD, AO, DPC) hacked
%Control: key (0)
%Control: author (72) initials jnrlst
%Control: editor formatted (1) identically to author
%Control: production of article title (-1) disabled
%Control: page (0) single
%Control: year (1) truncated
%Control: production of eprint (0) enabled
\begin{thebibliography}{95}%
\makeatletter
\providecommand \@ifxundefined [1]{%
 \@ifx{#1\undefined}
}%
\providecommand \@ifnum [1]{%
 \ifnum #1\expandafter \@firstoftwo
 \else \expandafter \@secondoftwo
 \fi
}%
\providecommand \@ifx [1]{%
 \ifx #1\expandafter \@firstoftwo
 \else \expandafter \@secondoftwo
 \fi
}%
\providecommand \natexlab [1]{#1}%
\providecommand \enquote  [1]{``#1''}%
\providecommand \bibnamefont  [1]{#1}%
\providecommand \bibfnamefont [1]{#1}%
\providecommand \citenamefont [1]{#1}%
\providecommand \href@noop [0]{\@secondoftwo}%
\providecommand \href [0]{\begingroup \@sanitize@url \@href}%
\providecommand \@href[1]{\@@startlink{#1}\@@href}%
\providecommand \@@href[1]{\endgroup#1\@@endlink}%
\providecommand \@sanitize@url [0]{\catcode `\\12\catcode `\$12\catcode
  `\&12\catcode `\#12\catcode `\^12\catcode `\_12\catcode `\%12\relax}%
\providecommand \@@startlink[1]{}%
\providecommand \@@endlink[0]{}%
\providecommand \url  [0]{\begingroup\@sanitize@url \@url }%
\providecommand \@url [1]{\endgroup\@href {#1}{\urlprefix }}%
\providecommand \urlprefix  [0]{URL }%
\providecommand \Eprint [0]{\href }%
\providecommand \doibase [0]{http://dx.doi.org/}%
\providecommand \selectlanguage [0]{\@gobble}%
\providecommand \bibinfo  [0]{\@secondoftwo}%
\providecommand \bibfield  [0]{\@secondoftwo}%
\providecommand \translation [1]{[#1]}%
\providecommand \BibitemOpen [0]{}%
\providecommand \bibitemStop [0]{}%
\providecommand \bibitemNoStop [0]{.\EOS\space}%
\providecommand \EOS [0]{\spacefactor3000\relax}%
\providecommand \BibitemShut  [1]{\csname bibitem#1\endcsname}%
\let\auto@bib@innerbib\@empty
%</preamble>
\bibitem [{\citenamefont {Vedral}\ \emph {et~al.}(1997)\citenamefont {Vedral},
  \citenamefont {Plenio}, \citenamefont {Rippin},\ and\ \citenamefont
  {Knight}}]{PhysRevLett.78.2275}%
  \BibitemOpen
  \bibfield  {author} {\bibinfo {author} {\bibfnamefont {V.}~\bibnamefont
  {Vedral}}, \bibinfo {author} {\bibfnamefont {M.~B.}\ \bibnamefont {Plenio}},
  \bibinfo {author} {\bibfnamefont {M.~A.}\ \bibnamefont {Rippin}}, \ and\
  \bibinfo {author} {\bibfnamefont {P.~L.}\ \bibnamefont {Knight}},\ }\href
  {\doibase 10.1103/PhysRevLett.78.2275} {\bibfield  {journal} {\bibinfo
  {journal} {Phys. Rev. Lett.}\ }\textbf {\bibinfo {volume} {78}},\ \bibinfo
  {pages} {2275} (\bibinfo {year} {1997})}\BibitemShut {NoStop}%
\bibitem [{\citenamefont {Bru\ss}(2002)}]{Bruss2002}%
  \BibitemOpen
  \bibfield  {author} {\bibinfo {author} {\bibfnamefont {D.}~\bibnamefont
  {Bru\ss}},\ }\href {\doibase 10.1063/1.1494474} {\bibfield  {journal}
  {\bibinfo  {journal} {J. Math. Phys.}\ }\textbf {\bibinfo {volume} {43}},\
  \bibinfo {pages} {4237} (\bibinfo {year} {2002})}\BibitemShut {NoStop}%
\bibitem [{\citenamefont {{Plenio}}\ and\ \citenamefont
  {{Virmani}}(2007)}]{Plenio2007}%
  \BibitemOpen
  \bibfield  {author} {\bibinfo {author} {\bibfnamefont {M.~B.}\ \bibnamefont
  {{Plenio}}}\ and\ \bibinfo {author} {\bibfnamefont {S.}~\bibnamefont
  {{Virmani}}},\ }\href@noop {} {\bibfield  {journal} {\bibinfo  {journal}
  {Quant. Inf. Comp.}\ }\textbf {\bibinfo {volume} {7}},\ \bibinfo {pages} {1}
  (\bibinfo {year} {2007})},\ \Eprint
  {http://arxiv.org/abs/arXiv:quant-ph/0504163} {arXiv:quant-ph/0504163}
  \BibitemShut {NoStop}%
\bibitem [{\citenamefont {Horodecki}\ \emph {et~al.}(2009)\citenamefont
  {Horodecki}, \citenamefont {Horodecki}, \citenamefont {Horodecki},\ and\
  \citenamefont {Horodecki}}]{Horodecki2009}%
  \BibitemOpen
  \bibfield  {author} {\bibinfo {author} {\bibfnamefont {R.}~\bibnamefont
  {Horodecki}}, \bibinfo {author} {\bibfnamefont {P.}~\bibnamefont
  {Horodecki}}, \bibinfo {author} {\bibfnamefont {M.}~\bibnamefont
  {Horodecki}}, \ and\ \bibinfo {author} {\bibfnamefont {K.}~\bibnamefont
  {Horodecki}},\ }\href {\doibase 10.1103/RevModPhys.81.865} {\bibfield
  {journal} {\bibinfo  {journal} {Rev. Mod. Phys.}\ }\textbf {\bibinfo {volume}
  {81}},\ \bibinfo {pages} {865} (\bibinfo {year} {2009})}\BibitemShut
  {NoStop}%
\bibitem [{\citenamefont {Ollivier}\ and\ \citenamefont
  {Zurek}(2001)}]{Ollivier2002}%
  \BibitemOpen
  \bibfield  {author} {\bibinfo {author} {\bibfnamefont {H.}~\bibnamefont
  {Ollivier}}\ and\ \bibinfo {author} {\bibfnamefont {W.~H.}\ \bibnamefont
  {Zurek}},\ }\href {\doibase 10.1103/PhysRevLett.88.017901} {\bibfield
  {journal} {\bibinfo  {journal} {Phys. Rev. Lett.}\ }\textbf {\bibinfo
  {volume} {88}},\ \bibinfo {pages} {017901} (\bibinfo {year}
  {2001})}\BibitemShut {NoStop}%
\bibitem [{\citenamefont {Henderson}\ and\ \citenamefont
  {Vedral}(2001)}]{Henderson2001}%
  \BibitemOpen
  \bibfield  {author} {\bibinfo {author} {\bibfnamefont {L.}~\bibnamefont
  {Henderson}}\ and\ \bibinfo {author} {\bibfnamefont {V.}~\bibnamefont
  {Vedral}},\ }\href {\doibase 10.1088/0305-4470/34/35/315} {\bibfield
  {journal} {\bibinfo  {journal} {J. Phys. A}\ }\textbf {\bibinfo {volume}
  {34}},\ \bibinfo {pages} {6899} (\bibinfo {year} {2001})}\BibitemShut
  {NoStop}%
\bibitem [{\citenamefont {Modi}\ \emph {et~al.}(2012)\citenamefont {Modi},
  \citenamefont {Brodutch}, \citenamefont {Cable}, \citenamefont {Paterek},\
  and\ \citenamefont {Vedral}}]{RevModPhys.84.1655}%
  \BibitemOpen
  \bibfield  {author} {\bibinfo {author} {\bibfnamefont {K.}~\bibnamefont
  {Modi}}, \bibinfo {author} {\bibfnamefont {A.}~\bibnamefont {Brodutch}},
  \bibinfo {author} {\bibfnamefont {H.}~\bibnamefont {Cable}}, \bibinfo
  {author} {\bibfnamefont {T.}~\bibnamefont {Paterek}}, \ and\ \bibinfo
  {author} {\bibfnamefont {V.}~\bibnamefont {Vedral}},\ }\href {\doibase
  10.1103/RevModPhys.84.1655} {\bibfield  {journal} {\bibinfo  {journal} {Rev.
  Mod. Phys.}\ }\textbf {\bibinfo {volume} {84}},\ \bibinfo {pages} {1655}
  (\bibinfo {year} {2012})}\BibitemShut {NoStop}%
\bibitem [{\citenamefont {Streltsov}(2015)}]{Streltsov2014}%
  \BibitemOpen
  \bibfield  {author} {\bibinfo {author} {\bibfnamefont {A.}~\bibnamefont
  {Streltsov}},\ }\href {\doibase 10.1007/978-3-319-09656-8} {\emph {\bibinfo
  {title} {{Quantum Correlations Beyond Entanglement and their Role in Quantum
  Information Theory}}}}\ (\bibinfo  {publisher} {SpringerBriefs in Physics},\
  \bibinfo {year} {2015})\ \Eprint {http://arxiv.org/abs/arXiv:1411.3208}
  {arXiv:1411.3208} \BibitemShut {NoStop}%
\bibitem [{\citenamefont {Adesso}\ \emph {et~al.}(2016)\citenamefont {Adesso},
  \citenamefont {Bromley},\ and\ \citenamefont {Cianciaruso}}]{Adesso2016}%
  \BibitemOpen
  \bibfield  {author} {\bibinfo {author} {\bibfnamefont {G.}~\bibnamefont
  {Adesso}}, \bibinfo {author} {\bibfnamefont {T.~R.}\ \bibnamefont {Bromley}},
  \ and\ \bibinfo {author} {\bibfnamefont {M.}~\bibnamefont {Cianciaruso}},\
  }\href {\doibase 10.1088/1751-8113/49/47/473001} {\bibfield  {journal}
  {\bibinfo  {journal} {J. Phys. A}\ }\textbf {\bibinfo {volume} {49}},\
  \bibinfo {pages} {473001} (\bibinfo {year} {2016})}\BibitemShut {NoStop}%
\bibitem [{\citenamefont {{Adesso}}\ \emph {et~al.}(2016)\citenamefont
  {{Adesso}}, \citenamefont {{Cianciaruso}},\ and\ \citenamefont
  {{Bromley}}}]{Adesso2016b}%
  \BibitemOpen
  \bibfield  {author} {\bibinfo {author} {\bibfnamefont {G.}~\bibnamefont
  {{Adesso}}}, \bibinfo {author} {\bibfnamefont {M.}~\bibnamefont
  {{Cianciaruso}}}, \ and\ \bibinfo {author} {\bibfnamefont {T.~R.}\
  \bibnamefont {{Bromley}}},\ }\href {https://arxiv.org/abs/1611.01959}
  {\bibfield  {journal} {\bibinfo  {journal} {arXiv:1611.01959}\ } (\bibinfo
  {year} {2016})}\BibitemShut {NoStop}%
\bibitem [{\citenamefont {{Aberg}}(2006)}]{Aberg2006}%
  \BibitemOpen
  \bibfield  {author} {\bibinfo {author} {\bibfnamefont {J.}~\bibnamefont
  {{Aberg}}},\ }\href {https://arxiv.org/abs/quant-ph/0612146} {\bibfield
  {journal} {\bibinfo  {journal} {arXiv:quant-ph/0612146}\ } (\bibinfo {year}
  {2006})}\BibitemShut {NoStop}%
\bibitem [{\citenamefont {Baumgratz}\ \emph {et~al.}(2014)\citenamefont
  {Baumgratz}, \citenamefont {Cramer},\ and\ \citenamefont
  {Plenio}}]{Plenio2014}%
  \BibitemOpen
  \bibfield  {author} {\bibinfo {author} {\bibfnamefont {T.}~\bibnamefont
  {Baumgratz}}, \bibinfo {author} {\bibfnamefont {M.}~\bibnamefont {Cramer}}, \
  and\ \bibinfo {author} {\bibfnamefont {M.~B.}\ \bibnamefont {Plenio}},\
  }\href {\doibase 10.1103/PhysRevLett.113.140401} {\bibfield  {journal}
  {\bibinfo  {journal} {Phys. Rev. Lett.}\ }\textbf {\bibinfo {volume} {113}},\
  \bibinfo {pages} {140401} (\bibinfo {year} {2014})}\BibitemShut {NoStop}%
\bibitem [{\citenamefont {Levi}\ and\ \citenamefont
  {Mintert}(2014)}]{Levi2014}%
  \BibitemOpen
  \bibfield  {author} {\bibinfo {author} {\bibfnamefont {F.}~\bibnamefont
  {Levi}}\ and\ \bibinfo {author} {\bibfnamefont {F.}~\bibnamefont {Mintert}},\
  }\href {\doibase 10.1088/1367-2630/16/3/033007} {\bibfield  {journal}
  {\bibinfo  {journal} {New J. Phys.}\ }\textbf {\bibinfo {volume} {16}},\
  \bibinfo {pages} {033007} (\bibinfo {year} {2014})}\BibitemShut {NoStop}%
\bibitem [{\citenamefont {Winter}\ and\ \citenamefont
  {Yang}(2016)}]{WinterResourceCoherence}%
  \BibitemOpen
  \bibfield  {author} {\bibinfo {author} {\bibfnamefont {A.}~\bibnamefont
  {Winter}}\ and\ \bibinfo {author} {\bibfnamefont {D.}~\bibnamefont {Yang}},\
  }\href {\doibase 10.1103/PhysRevLett.116.120404} {\bibfield  {journal}
  {\bibinfo  {journal} {Phys. Rev. Lett.}\ }\textbf {\bibinfo {volume} {116}},\
  \bibinfo {pages} {120404} (\bibinfo {year} {2016})}\BibitemShut {NoStop}%
\bibitem [{\citenamefont {Marvian}\ and\ \citenamefont
  {Spekkens}(2016)}]{Marvian2016}%
  \BibitemOpen
  \bibfield  {author} {\bibinfo {author} {\bibfnamefont {I.}~\bibnamefont
  {Marvian}}\ and\ \bibinfo {author} {\bibfnamefont {R.~W.}\ \bibnamefont
  {Spekkens}},\ }\href {\doibase 10.1103/PhysRevA.94.052324} {\bibfield
  {journal} {\bibinfo  {journal} {Phys. Rev. A}\ }\textbf {\bibinfo {volume}
  {94}},\ \bibinfo {pages} {052324} (\bibinfo {year} {2016})}\BibitemShut
  {NoStop}%
\bibitem [{\citenamefont {Streltsov}\ \emph
  {et~al.}(2017{\natexlab{a}})\citenamefont {Streltsov}, \citenamefont
  {Adesso},\ and\ \citenamefont {Plenio}}]{CoherenceResource}%
  \BibitemOpen
  \bibfield  {author} {\bibinfo {author} {\bibfnamefont {A.}~\bibnamefont
  {Streltsov}}, \bibinfo {author} {\bibfnamefont {G.}~\bibnamefont {Adesso}}, \
  and\ \bibinfo {author} {\bibfnamefont {M.~B.}\ \bibnamefont {Plenio}},\
  }\href {\doibase 10.1103/RevModPhys.89.041003} {\bibfield  {journal}
  {\bibinfo  {journal} {Rev. Mod. Phys.}\ }\textbf {\bibinfo {volume} {89}},\
  \bibinfo {pages} {041003} (\bibinfo {year} {2017}{\natexlab{a}})}\BibitemShut
  {NoStop}%
\bibitem [{\citenamefont {Gisin}\ \emph {et~al.}(2002)\citenamefont {Gisin},
  \citenamefont {Ribordy}, \citenamefont {Tittel},\ and\ \citenamefont
  {Zbinden}}]{RevModPhys.74.145}%
  \BibitemOpen
  \bibfield  {author} {\bibinfo {author} {\bibfnamefont {N.}~\bibnamefont
  {Gisin}}, \bibinfo {author} {\bibfnamefont {G.}~\bibnamefont {Ribordy}},
  \bibinfo {author} {\bibfnamefont {W.}~\bibnamefont {Tittel}}, \ and\ \bibinfo
  {author} {\bibfnamefont {H.}~\bibnamefont {Zbinden}},\ }\href {\doibase
  10.1103/RevModPhys.74.145} {\bibfield  {journal} {\bibinfo  {journal} {Rev.
  Mod. Phys.}\ }\textbf {\bibinfo {volume} {74}},\ \bibinfo {pages} {145}
  (\bibinfo {year} {2002})}\BibitemShut {NoStop}%
\bibitem [{\citenamefont {Nielsen}\ and\ \citenamefont
  {Chuang}(2010)}]{NielsenChuang}%
  \BibitemOpen
  \bibfield  {author} {\bibinfo {author} {\bibfnamefont {M.~A.}\ \bibnamefont
  {Nielsen}}\ and\ \bibinfo {author} {\bibfnamefont {I.~L.}\ \bibnamefont
  {Chuang}},\ }\href {\doibase 10.1017/CBO9780511976667} {\emph {\bibinfo
  {title} {Quantum Computation and Quantum Information}}},\ \bibinfo {edition}
  {10th}\ ed.\ (\bibinfo  {publisher} {Cambridge University Press},\ \bibinfo
  {year} {2010})\BibitemShut {NoStop}%
\bibitem [{\citenamefont {{Giovannetti}}\ \emph {et~al.}(2011)\citenamefont
  {{Giovannetti}}, \citenamefont {{Lloyd}},\ and\ \citenamefont
  {{Maccone}}}]{Giovannetti2011}%
  \BibitemOpen
  \bibfield  {author} {\bibinfo {author} {\bibfnamefont {V.}~\bibnamefont
  {{Giovannetti}}}, \bibinfo {author} {\bibfnamefont {S.}~\bibnamefont
  {{Lloyd}}}, \ and\ \bibinfo {author} {\bibfnamefont {L.}~\bibnamefont
  {{Maccone}}},\ }\href {\doibase 10.1038/nphoton.2011.35} {\bibfield
  {journal} {\bibinfo  {journal} {Nature Photonics}\ }\textbf {\bibinfo
  {volume} {5}},\ \bibinfo {pages} {222} (\bibinfo {year} {2011})}\BibitemShut
  {NoStop}%
\bibitem [{\citenamefont {{Demkowicz-Dobrza{\'n}ski}}\ \emph
  {et~al.}(2012)\citenamefont {{Demkowicz-Dobrza{\'n}ski}}, \citenamefont
  {{Ko{\l}ody{\'n}ski}},\ and\ \citenamefont {{Gu{\c t}{\u
  a}}}}]{Demkowicz2012}%
  \BibitemOpen
  \bibfield  {author} {\bibinfo {author} {\bibfnamefont {R.}~\bibnamefont
  {{Demkowicz-Dobrza{\'n}ski}}}, \bibinfo {author} {\bibfnamefont
  {J.}~\bibnamefont {{Ko{\l}ody{\'n}ski}}}, \ and\ \bibinfo {author}
  {\bibfnamefont {M.}~\bibnamefont {{Gu{\c t}{\u a}}}},\ }\href {\doibase
  10.1038/ncomms2067} {\bibfield  {journal} {\bibinfo  {journal} {Nature
  Communications}\ }\textbf {\bibinfo {volume} {3}},\ \bibinfo {pages} {1063}
  (\bibinfo {year} {2012})}\BibitemShut {NoStop}%
\bibitem [{\citenamefont {Pezz\'e}\ and\ \citenamefont
  {Smerzi}(2014)}]{Varenna}%
  \BibitemOpen
  \bibfield  {author} {\bibinfo {author} {\bibfnamefont {L.}~\bibnamefont
  {Pezz\'e}}\ and\ \bibinfo {author} {\bibfnamefont {A.}~\bibnamefont
  {Smerzi}},\ }in\ \href@noop {} {\emph {\bibinfo {booktitle} {Atom
  Interferometry, Proceedings of the International School of Physics "Enrico
  Fermi", Course 188, Varenna}}},\ \bibinfo {editor} {edited by\ \bibinfo
  {editor} {\bibfnamefont {G.~M.}\ \bibnamefont {Tino}}\ and\ \bibinfo {editor}
  {\bibfnamefont {M.~A.}\ \bibnamefont {Kasevich}}}\ (\bibinfo  {publisher}
  {IOS Press},\ \bibinfo {address} {Amsterdam, Netherlands},\ \bibinfo {year}
  {2014})\BibitemShut {NoStop}%
\bibitem [{\citenamefont {T\'oth}\ and\ \citenamefont
  {Apellaniz}(2014)}]{Toth2014}%
  \BibitemOpen
  \bibfield  {author} {\bibinfo {author} {\bibfnamefont {G.}~\bibnamefont
  {T\'oth}}\ and\ \bibinfo {author} {\bibfnamefont {I.}~\bibnamefont
  {Apellaniz}},\ }\href {\doibase 10.1088/1751-8113/47/42/424006} {\bibfield
  {journal} {\bibinfo  {journal} {J. Phys. A}\ }\textbf {\bibinfo {volume}
  {47}},\ \bibinfo {pages} {424006} (\bibinfo {year} {2014})}\BibitemShut
  {NoStop}%
\bibitem [{\citenamefont {Girolami}\ \emph {et~al.}(2014)\citenamefont
  {Girolami}, \citenamefont {Souza}, \citenamefont {Giovannetti}, \citenamefont
  {Tufarelli}, \citenamefont {Filgueiras}, \citenamefont {Sarthour},
  \citenamefont {Soares-Pinto}, \citenamefont {Oliveira},\ and\ \citenamefont
  {Adesso}}]{PhysRevLett.112.210401}%
  \BibitemOpen
  \bibfield  {author} {\bibinfo {author} {\bibfnamefont {D.}~\bibnamefont
  {Girolami}}, \bibinfo {author} {\bibfnamefont {A.~M.}\ \bibnamefont {Souza}},
  \bibinfo {author} {\bibfnamefont {V.}~\bibnamefont {Giovannetti}}, \bibinfo
  {author} {\bibfnamefont {T.}~\bibnamefont {Tufarelli}}, \bibinfo {author}
  {\bibfnamefont {J.~G.}\ \bibnamefont {Filgueiras}}, \bibinfo {author}
  {\bibfnamefont {R.~S.}\ \bibnamefont {Sarthour}}, \bibinfo {author}
  {\bibfnamefont {D.~O.}\ \bibnamefont {Soares-Pinto}}, \bibinfo {author}
  {\bibfnamefont {I.~S.}\ \bibnamefont {Oliveira}}, \ and\ \bibinfo {author}
  {\bibfnamefont {G.}~\bibnamefont {Adesso}},\ }\href {\doibase
  10.1103/PhysRevLett.112.210401} {\bibfield  {journal} {\bibinfo  {journal}
  {Phys. Rev. Lett.}\ }\textbf {\bibinfo {volume} {112}},\ \bibinfo {pages}
  {210401} (\bibinfo {year} {2014})}\BibitemShut {NoStop}%
\bibitem [{\citenamefont {Horodecki}\ and\ \citenamefont
  {Oppenheim}(2013)}]{Horodecki2013}%
  \BibitemOpen
  \bibfield  {author} {\bibinfo {author} {\bibfnamefont {M.}~\bibnamefont
  {Horodecki}}\ and\ \bibinfo {author} {\bibfnamefont {J.}~\bibnamefont
  {Oppenheim}},\ }\href {\doibase 10.1142/S0217979213450197} {\bibfield
  {journal} {\bibinfo  {journal} {Int. J. Mod. Phys. B}\ }\textbf {\bibinfo
  {volume} {27}},\ \bibinfo {pages} {1345019} (\bibinfo {year}
  {2013})}\BibitemShut {NoStop}%
\bibitem [{\citenamefont {Coecke}\ \emph {et~al.}(2016)\citenamefont {Coecke},
  \citenamefont {Fritz},\ and\ \citenamefont
  {Spekkens}}]{SpekkensResourceTheory}%
  \BibitemOpen
  \bibfield  {author} {\bibinfo {author} {\bibfnamefont {B.}~\bibnamefont
  {Coecke}}, \bibinfo {author} {\bibfnamefont {T.}~\bibnamefont {Fritz}}, \
  and\ \bibinfo {author} {\bibfnamefont {R.~W.}\ \bibnamefont {Spekkens}},\
  }\href {\doibase 10.1016/j.ic.2016.02.008} {\bibfield  {journal} {\bibinfo
  {journal} {Inform. Comp.}\ }\textbf {\bibinfo {volume} {250}},\ \bibinfo
  {pages} {59} (\bibinfo {year} {2016})}\BibitemShut {NoStop}%
\bibitem [{Note1()}]{Note1}%
  \BibitemOpen
  \bibinfo {note} {Throughout this paper, a quantum operation (or just
  ``operation'') is used as a synonym for a completely positive
  trace-preserving map.}\BibitemShut {Stop}%
\bibitem [{Note2()}]{Note2}%
  \BibitemOpen
  \bibinfo {note} {Here, the term ``extremal'' is not synonymous to extremal
  elements of convex sets.}\BibitemShut {Stop}%
\bibitem [{Note3()}]{Note3}%
  \BibitemOpen
  \bibinfo {note} {This has to be understood in the asymptotic setting, where
  ``rate'' means the asymptotic fraction of required (distilled) resource
  states per copy of the desired (given) quantum state $\rho $. Widely used
  examples for such asymptotic rates are entanglement cost and distillable
  entanglement, we refer to Ref.~\cite {Plenio2007} for their formal
  definition.}\BibitemShut {Stop}%
\bibitem [{\citenamefont {Horodecki}\ \emph {et~al.}(2003)\citenamefont
  {Horodecki}, \citenamefont {Horodecki},\ and\ \citenamefont
  {Oppenheim}}]{Horodecki2003}%
  \BibitemOpen
  \bibfield  {author} {\bibinfo {author} {\bibfnamefont {M.}~\bibnamefont
  {Horodecki}}, \bibinfo {author} {\bibfnamefont {P.}~\bibnamefont
  {Horodecki}}, \ and\ \bibinfo {author} {\bibfnamefont {J.}~\bibnamefont
  {Oppenheim}},\ }\href {\doibase 10.1103/PhysRevA.67.062104} {\bibfield
  {journal} {\bibinfo  {journal} {Phys. Rev. A}\ }\textbf {\bibinfo {volume}
  {67}},\ \bibinfo {pages} {062104} (\bibinfo {year} {2003})}\BibitemShut
  {NoStop}%
\bibitem [{\citenamefont {Gour}\ \emph {et~al.}(2015)\citenamefont {Gour},
  \citenamefont {M\"uller}, \citenamefont {Narasimhachar}, \citenamefont
  {Spekkens},\ and\ \citenamefont {Halpern}}]{GourResThOpPhysRep15}%
  \BibitemOpen
  \bibfield  {author} {\bibinfo {author} {\bibfnamefont {G.}~\bibnamefont
  {Gour}}, \bibinfo {author} {\bibfnamefont {M.~P.}\ \bibnamefont {M\"uller}},
  \bibinfo {author} {\bibfnamefont {V.}~\bibnamefont {Narasimhachar}}, \bibinfo
  {author} {\bibfnamefont {R.~W.}\ \bibnamefont {Spekkens}}, \ and\ \bibinfo
  {author} {\bibfnamefont {N.~Y.}\ \bibnamefont {Halpern}},\ }\href {\doibase
  10.1016/j.physrep.2015.04.003} {\bibfield  {journal} {\bibinfo  {journal}
  {Phys. Rep.}\ }\textbf {\bibinfo {volume} {583}},\ \bibinfo {pages} {1}
  (\bibinfo {year} {2015})}\BibitemShut {NoStop}%
\bibitem [{\citenamefont {Chitambar}\ and\ \citenamefont
  {Gour}(2016{\natexlab{a}})}]{Chitambar2016}%
  \BibitemOpen
  \bibfield  {author} {\bibinfo {author} {\bibfnamefont {E.}~\bibnamefont
  {Chitambar}}\ and\ \bibinfo {author} {\bibfnamefont {G.}~\bibnamefont
  {Gour}},\ }\href {\doibase 10.1103/PhysRevLett.117.030401} {\bibfield
  {journal} {\bibinfo  {journal} {Phys. Rev. Lett.}\ }\textbf {\bibinfo
  {volume} {117}},\ \bibinfo {pages} {030401} (\bibinfo {year}
  {2016}{\natexlab{a}})}\BibitemShut {NoStop}%
\bibitem [{\citenamefont {Chitambar}\ and\ \citenamefont
  {Gour}(2016{\natexlab{b}})}]{Chitambar2016b}%
  \BibitemOpen
  \bibfield  {author} {\bibinfo {author} {\bibfnamefont {E.}~\bibnamefont
  {Chitambar}}\ and\ \bibinfo {author} {\bibfnamefont {G.}~\bibnamefont
  {Gour}},\ }\href {\doibase 10.1103/PhysRevA.94.052336} {\bibfield  {journal}
  {\bibinfo  {journal} {Phys. Rev. A}\ }\textbf {\bibinfo {volume} {94}},\
  \bibinfo {pages} {052336} (\bibinfo {year} {2016}{\natexlab{b}})}\BibitemShut
  {NoStop}%
\bibitem [{\citenamefont {Chitambar}\ and\ \citenamefont
  {Gour}(2017)}]{ChitambarPhysRevA.95.019902}%
  \BibitemOpen
  \bibfield  {author} {\bibinfo {author} {\bibfnamefont {E.}~\bibnamefont
  {Chitambar}}\ and\ \bibinfo {author} {\bibfnamefont {G.}~\bibnamefont
  {Gour}},\ }\href {\doibase 10.1103/PhysRevA.95.019902} {\bibfield  {journal}
  {\bibinfo  {journal} {Phys. Rev. A}\ }\textbf {\bibinfo {volume} {95}},\
  \bibinfo {pages} {019902} (\bibinfo {year} {2017})}\BibitemShut {NoStop}%
\bibitem [{\citenamefont {Yadin}\ \emph {et~al.}(2016)\citenamefont {Yadin},
  \citenamefont {Ma}, \citenamefont {Girolami}, \citenamefont {Gu},\ and\
  \citenamefont {Vedral}}]{Yadin2016}%
  \BibitemOpen
  \bibfield  {author} {\bibinfo {author} {\bibfnamefont {B.}~\bibnamefont
  {Yadin}}, \bibinfo {author} {\bibfnamefont {J.}~\bibnamefont {Ma}}, \bibinfo
  {author} {\bibfnamefont {D.}~\bibnamefont {Girolami}}, \bibinfo {author}
  {\bibfnamefont {M.}~\bibnamefont {Gu}}, \ and\ \bibinfo {author}
  {\bibfnamefont {V.}~\bibnamefont {Vedral}},\ }\href {\doibase
  10.1103/PhysRevX.6.041028} {\bibfield  {journal} {\bibinfo  {journal} {Phys.
  Rev. X}\ }\textbf {\bibinfo {volume} {6}},\ \bibinfo {pages} {041028}
  (\bibinfo {year} {2016})}\BibitemShut {NoStop}%
\bibitem [{\citenamefont {Marvian}\ \emph {et~al.}(2016)\citenamefont
  {Marvian}, \citenamefont {Spekkens},\ and\ \citenamefont
  {Zanardi}}]{Marvian2016b}%
  \BibitemOpen
  \bibfield  {author} {\bibinfo {author} {\bibfnamefont {I.}~\bibnamefont
  {Marvian}}, \bibinfo {author} {\bibfnamefont {R.~W.}\ \bibnamefont
  {Spekkens}}, \ and\ \bibinfo {author} {\bibfnamefont {P.}~\bibnamefont
  {Zanardi}},\ }\href {\doibase 10.1103/PhysRevA.93.052331} {\bibfield
  {journal} {\bibinfo  {journal} {Phys. Rev. A}\ }\textbf {\bibinfo {volume}
  {93}},\ \bibinfo {pages} {052331} (\bibinfo {year} {2016})}\BibitemShut
  {NoStop}%
\bibitem [{\citenamefont {de~Vicente}\ and\ \citenamefont
  {Streltsov}(2017)}]{GenuineCoherence}%
  \BibitemOpen
  \bibfield  {author} {\bibinfo {author} {\bibfnamefont {J.~I.}\ \bibnamefont
  {de~Vicente}}\ and\ \bibinfo {author} {\bibfnamefont {A.}~\bibnamefont
  {Streltsov}},\ }\href {\doibase 10.1088/1751-8121/50/4/045301} {\bibfield
  {journal} {\bibinfo  {journal} {J. Phys. A}\ }\textbf {\bibinfo {volume}
  {50}},\ \bibinfo {pages} {045301} (\bibinfo {year} {2017})}\BibitemShut
  {NoStop}%
\bibitem [{\citenamefont {Bromley}\ \emph {et~al.}(2015)\citenamefont
  {Bromley}, \citenamefont {Cianciaruso},\ and\ \citenamefont
  {Adesso}}]{Bromley2015}%
  \BibitemOpen
  \bibfield  {author} {\bibinfo {author} {\bibfnamefont {T.~R.}\ \bibnamefont
  {Bromley}}, \bibinfo {author} {\bibfnamefont {M.}~\bibnamefont
  {Cianciaruso}}, \ and\ \bibinfo {author} {\bibfnamefont {G.}~\bibnamefont
  {Adesso}},\ }\href {\doibase 10.1103/PhysRevLett.114.210401} {\bibfield
  {journal} {\bibinfo  {journal} {Phys. Rev. Lett.}\ }\textbf {\bibinfo
  {volume} {114}},\ \bibinfo {pages} {210401} (\bibinfo {year}
  {2015})}\BibitemShut {NoStop}%
\bibitem [{\citenamefont {Streltsov}\ \emph {et~al.}(2015)\citenamefont
  {Streltsov}, \citenamefont {Singh}, \citenamefont {Dhar}, \citenamefont
  {Bera},\ and\ \citenamefont {Adesso}}]{Streltsov2015}%
  \BibitemOpen
  \bibfield  {author} {\bibinfo {author} {\bibfnamefont {A.}~\bibnamefont
  {Streltsov}}, \bibinfo {author} {\bibfnamefont {U.}~\bibnamefont {Singh}},
  \bibinfo {author} {\bibfnamefont {H.~S.}\ \bibnamefont {Dhar}}, \bibinfo
  {author} {\bibfnamefont {M.~N.}\ \bibnamefont {Bera}}, \ and\ \bibinfo
  {author} {\bibfnamefont {G.}~\bibnamefont {Adesso}},\ }\href {\doibase
  10.1103/PhysRevLett.115.020403} {\bibfield  {journal} {\bibinfo  {journal}
  {Phys. Rev. Lett.}\ }\textbf {\bibinfo {volume} {115}},\ \bibinfo {pages}
  {020403} (\bibinfo {year} {2015})}\BibitemShut {NoStop}%
\bibitem [{\citenamefont {Chitambar}\ \emph {et~al.}(2016)\citenamefont
  {Chitambar}, \citenamefont {Streltsov}, \citenamefont {Rana}, \citenamefont
  {Bera}, \citenamefont {Adesso},\ and\ \citenamefont
  {Lewenstein}}]{Chitambar2016c}%
  \BibitemOpen
  \bibfield  {author} {\bibinfo {author} {\bibfnamefont {E.}~\bibnamefont
  {Chitambar}}, \bibinfo {author} {\bibfnamefont {A.}~\bibnamefont
  {Streltsov}}, \bibinfo {author} {\bibfnamefont {S.}~\bibnamefont {Rana}},
  \bibinfo {author} {\bibfnamefont {M.~N.}\ \bibnamefont {Bera}}, \bibinfo
  {author} {\bibfnamefont {G.}~\bibnamefont {Adesso}}, \ and\ \bibinfo {author}
  {\bibfnamefont {M.}~\bibnamefont {Lewenstein}},\ }\href {\doibase
  10.1103/PhysRevLett.116.070402} {\bibfield  {journal} {\bibinfo  {journal}
  {Phys. Rev. Lett.}\ }\textbf {\bibinfo {volume} {116}},\ \bibinfo {pages}
  {070402} (\bibinfo {year} {2016})}\BibitemShut {NoStop}%
\bibitem [{\citenamefont {Ma}\ \emph {et~al.}(2016)\citenamefont {Ma},
  \citenamefont {Yadin}, \citenamefont {Girolami}, \citenamefont {Vedral},\
  and\ \citenamefont {Gu}}]{Ma2016}%
  \BibitemOpen
  \bibfield  {author} {\bibinfo {author} {\bibfnamefont {J.}~\bibnamefont
  {Ma}}, \bibinfo {author} {\bibfnamefont {B.}~\bibnamefont {Yadin}}, \bibinfo
  {author} {\bibfnamefont {D.}~\bibnamefont {Girolami}}, \bibinfo {author}
  {\bibfnamefont {V.}~\bibnamefont {Vedral}}, \ and\ \bibinfo {author}
  {\bibfnamefont {M.}~\bibnamefont {Gu}},\ }\href {\doibase
  10.1103/PhysRevLett.116.160407} {\bibfield  {journal} {\bibinfo  {journal}
  {Phys. Rev. Lett.}\ }\textbf {\bibinfo {volume} {116}},\ \bibinfo {pages}
  {160407} (\bibinfo {year} {2016})}\BibitemShut {NoStop}%
\bibitem [{\citenamefont {Streltsov}\ \emph {et~al.}(2016)\citenamefont
  {Streltsov}, \citenamefont {Chitambar}, \citenamefont {Rana}, \citenamefont
  {Bera}, \citenamefont {Winter},\ and\ \citenamefont
  {Lewenstein}}]{Streltsov2016}%
  \BibitemOpen
  \bibfield  {author} {\bibinfo {author} {\bibfnamefont {A.}~\bibnamefont
  {Streltsov}}, \bibinfo {author} {\bibfnamefont {E.}~\bibnamefont
  {Chitambar}}, \bibinfo {author} {\bibfnamefont {S.}~\bibnamefont {Rana}},
  \bibinfo {author} {\bibfnamefont {M.~N.}\ \bibnamefont {Bera}}, \bibinfo
  {author} {\bibfnamefont {A.}~\bibnamefont {Winter}}, \ and\ \bibinfo {author}
  {\bibfnamefont {M.}~\bibnamefont {Lewenstein}},\ }\href {\doibase
  10.1103/PhysRevLett.116.240405} {\bibfield  {journal} {\bibinfo  {journal}
  {Phys. Rev. Lett.}\ }\textbf {\bibinfo {volume} {116}},\ \bibinfo {pages}
  {240405} (\bibinfo {year} {2016})}\BibitemShut {NoStop}%
\bibitem [{\citenamefont {Chitambar}\ and\ \citenamefont
  {Hsieh}(2016)}]{Chitambar2016d}%
  \BibitemOpen
  \bibfield  {author} {\bibinfo {author} {\bibfnamefont {E.}~\bibnamefont
  {Chitambar}}\ and\ \bibinfo {author} {\bibfnamefont {M.-H.}\ \bibnamefont
  {Hsieh}},\ }\href {\doibase 10.1103/PhysRevLett.117.020402} {\bibfield
  {journal} {\bibinfo  {journal} {Phys. Rev. Lett.}\ }\textbf {\bibinfo
  {volume} {117}},\ \bibinfo {pages} {020402} (\bibinfo {year}
  {2016})}\BibitemShut {NoStop}%
\bibitem [{\citenamefont {Matera}\ \emph {et~al.}(2016)\citenamefont {Matera},
  \citenamefont {Egloff}, \citenamefont {Killoran},\ and\ \citenamefont
  {Plenio}}]{Matera2016}%
  \BibitemOpen
  \bibfield  {author} {\bibinfo {author} {\bibfnamefont {J.~M.}\ \bibnamefont
  {Matera}}, \bibinfo {author} {\bibfnamefont {D.}~\bibnamefont {Egloff}},
  \bibinfo {author} {\bibfnamefont {N.}~\bibnamefont {Killoran}}, \ and\
  \bibinfo {author} {\bibfnamefont {M.~B.}\ \bibnamefont {Plenio}},\ }\href
  {\doibase 10.1088/2058-9565/1/1/01LT01} {\bibfield  {journal} {\bibinfo
  {journal} {Quantum Sci. Technol.}\ }\textbf {\bibinfo {volume} {1}},\
  \bibinfo {pages} {01LT01} (\bibinfo {year} {2016})}\BibitemShut {NoStop}%
\bibitem [{\citenamefont {Streltsov}\ \emph
  {et~al.}(2017{\natexlab{b}})\citenamefont {Streltsov}, \citenamefont {Rana},
  \citenamefont {Bera},\ and\ \citenamefont {Lewenstein}}]{Streltsov2015b}%
  \BibitemOpen
  \bibfield  {author} {\bibinfo {author} {\bibfnamefont {A.}~\bibnamefont
  {Streltsov}}, \bibinfo {author} {\bibfnamefont {S.}~\bibnamefont {Rana}},
  \bibinfo {author} {\bibfnamefont {M.~N.}\ \bibnamefont {Bera}}, \ and\
  \bibinfo {author} {\bibfnamefont {M.}~\bibnamefont {Lewenstein}},\ }\href
  {\doibase 10.1103/PhysRevX.7.011024} {\bibfield  {journal} {\bibinfo
  {journal} {Phys. Rev. X}\ }\textbf {\bibinfo {volume} {7}},\ \bibinfo {pages}
  {011024} (\bibinfo {year} {2017}{\natexlab{b}})}\BibitemShut {NoStop}%
\bibitem [{Note4()}]{Note4}%
  \BibitemOpen
  \bibinfo {note} {We note that the quantum relative entropy is not a distance
  in the mathematical sense, as it is not symmetric and does not fulfill the
  triangle inequality.}\BibitemShut {Stop}%
\bibitem [{\citenamefont {M\"uller-Lennert}\ \emph {et~al.}(2013)\citenamefont
  {M\"uller-Lennert}, \citenamefont {Dupuis}, \citenamefont {Szehr},
  \citenamefont {Fehr},\ and\ \citenamefont
  {Tomamichel}}]{QuantRenyEntrGeneralTomamichel}%
  \BibitemOpen
  \bibfield  {author} {\bibinfo {author} {\bibfnamefont {M.}~\bibnamefont
  {M\"uller-Lennert}}, \bibinfo {author} {\bibfnamefont {F.}~\bibnamefont
  {Dupuis}}, \bibinfo {author} {\bibfnamefont {O.}~\bibnamefont {Szehr}},
  \bibinfo {author} {\bibfnamefont {S.}~\bibnamefont {Fehr}}, \ and\ \bibinfo
  {author} {\bibfnamefont {M.}~\bibnamefont {Tomamichel}},\ }\href {\doibase
  10.1063/1.4838856} {\bibfield  {journal} {\bibinfo  {journal} {J. Math.
  Phys.}\ }\textbf {\bibinfo {volume} {54}},\ \bibinfo {pages} {122203}
  (\bibinfo {year} {2013})}\BibitemShut {NoStop}%
\bibitem [{\citenamefont {{Leditzky}}(2016)}]{Leditzky2016}%
  \BibitemOpen
  \bibfield  {author} {\bibinfo {author} {\bibfnamefont {F.}~\bibnamefont
  {{Leditzky}}},\ }\href {https://arxiv.org/abs/1611.08802} {\bibfield
  {journal} {\bibinfo  {journal} {arXiv:1611.08802}\ } (\bibinfo {year}
  {2016})}\BibitemShut {NoStop}%
\bibitem [{\citenamefont {Shao}\ \emph {et~al.}(2017)\citenamefont {Shao},
  \citenamefont {Li}, \citenamefont {Luo},\ and\ \citenamefont
  {Xi}}]{RenyiEntrCohMeas16}%
  \BibitemOpen
  \bibfield  {author} {\bibinfo {author} {\bibfnamefont {L.-H.}\ \bibnamefont
  {Shao}}, \bibinfo {author} {\bibfnamefont {Y.-M.}\ \bibnamefont {Li}},
  \bibinfo {author} {\bibfnamefont {Y.}~\bibnamefont {Luo}}, \ and\ \bibinfo
  {author} {\bibfnamefont {Z.-J.}\ \bibnamefont {Xi}},\ }\href {\doibase
  10.1088/0253-6102/67/6/631} {\bibfield  {journal} {\bibinfo  {journal}
  {Commun. Theor. Phys.}\ }\textbf {\bibinfo {volume} {67}},\ \bibinfo {pages}
  {631} (\bibinfo {year} {2017})}\BibitemShut {NoStop}%
\bibitem [{\citenamefont {Rastegin}(2016)}]{Rastegin2016}%
  \BibitemOpen
  \bibfield  {author} {\bibinfo {author} {\bibfnamefont {A.~E.}\ \bibnamefont
  {Rastegin}},\ }\href {\doibase 10.1103/PhysRevA.93.032136} {\bibfield
  {journal} {\bibinfo  {journal} {Phys. Rev. A}\ }\textbf {\bibinfo {volume}
  {93}},\ \bibinfo {pages} {032136} (\bibinfo {year} {2016})}\BibitemShut
  {NoStop}%
\bibitem [{\citenamefont {{Bu}}\ and\ \citenamefont {{Xiong}}(2017)}]{Bu2016}%
  \BibitemOpen
  \bibfield  {author} {\bibinfo {author} {\bibfnamefont {K.}~\bibnamefont
  {{Bu}}}\ and\ \bibinfo {author} {\bibfnamefont {C.}~\bibnamefont {{Xiong}}},\
  }\href@noop {} {\bibfield  {journal} {\bibinfo  {journal} {Quant. Inf.
  Comp.}\ }\textbf {\bibinfo {volume} {17}},\ \bibinfo {pages} {1206} (\bibinfo
  {year} {2017})},\ \Eprint {http://arxiv.org/abs/arXiv:1604.06524}
  {arXiv:1604.06524} \BibitemShut {NoStop}%
\bibitem [{\citenamefont {Yuan}\ \emph {et~al.}(2015)\citenamefont {Yuan},
  \citenamefont {Zhou}, \citenamefont {Cao},\ and\ \citenamefont
  {Ma}}]{Yuan2015}%
  \BibitemOpen
  \bibfield  {author} {\bibinfo {author} {\bibfnamefont {X.}~\bibnamefont
  {Yuan}}, \bibinfo {author} {\bibfnamefont {H.}~\bibnamefont {Zhou}}, \bibinfo
  {author} {\bibfnamefont {Z.}~\bibnamefont {Cao}}, \ and\ \bibinfo {author}
  {\bibfnamefont {X.}~\bibnamefont {Ma}},\ }\href {\doibase
  10.1103/PhysRevA.92.022124} {\bibfield  {journal} {\bibinfo  {journal} {Phys.
  Rev. A}\ }\textbf {\bibinfo {volume} {92}},\ \bibinfo {pages} {022124}
  (\bibinfo {year} {2015})}\BibitemShut {NoStop}%
\bibitem [{\citenamefont {Hu}(2016)}]{Hu2016}%
  \BibitemOpen
  \bibfield  {author} {\bibinfo {author} {\bibfnamefont {X.}~\bibnamefont
  {Hu}},\ }\href {\doibase 10.1103/PhysRevA.94.012326} {\bibfield  {journal}
  {\bibinfo  {journal} {Phys. Rev. A}\ }\textbf {\bibinfo {volume} {94}},\
  \bibinfo {pages} {012326} (\bibinfo {year} {2016})}\BibitemShut {NoStop}%
\bibitem [{\citenamefont {Bagan}\ \emph {et~al.}(2016)\citenamefont {Bagan},
  \citenamefont {Bergou}, \citenamefont {Cottrell},\ and\ \citenamefont
  {Hillery}}]{Bagan2016}%
  \BibitemOpen
  \bibfield  {author} {\bibinfo {author} {\bibfnamefont {E.}~\bibnamefont
  {Bagan}}, \bibinfo {author} {\bibfnamefont {J.~A.}\ \bibnamefont {Bergou}},
  \bibinfo {author} {\bibfnamefont {S.~S.}\ \bibnamefont {Cottrell}}, \ and\
  \bibinfo {author} {\bibfnamefont {M.}~\bibnamefont {Hillery}},\ }\href
  {\doibase 10.1103/PhysRevLett.116.160406} {\bibfield  {journal} {\bibinfo
  {journal} {Phys. Rev. Lett.}\ }\textbf {\bibinfo {volume} {116}},\ \bibinfo
  {pages} {160406} (\bibinfo {year} {2016})}\BibitemShut {NoStop}%
\bibitem [{\citenamefont {Bera}\ \emph {et~al.}(2015)\citenamefont {Bera},
  \citenamefont {Qureshi}, \citenamefont {Siddiqui},\ and\ \citenamefont
  {Pati}}]{Bera2015}%
  \BibitemOpen
  \bibfield  {author} {\bibinfo {author} {\bibfnamefont {M.~N.}\ \bibnamefont
  {Bera}}, \bibinfo {author} {\bibfnamefont {T.}~\bibnamefont {Qureshi}},
  \bibinfo {author} {\bibfnamefont {M.~A.}\ \bibnamefont {Siddiqui}}, \ and\
  \bibinfo {author} {\bibfnamefont {A.~K.}\ \bibnamefont {Pati}},\ }\href
  {\doibase 10.1103/PhysRevA.92.012118} {\bibfield  {journal} {\bibinfo
  {journal} {Phys. Rev. A}\ }\textbf {\bibinfo {volume} {92}},\ \bibinfo
  {pages} {012118} (\bibinfo {year} {2015})}\BibitemShut {NoStop}%
\bibitem [{\citenamefont {Ishizaka}\ and\ \citenamefont
  {Hiroshima}(2000)}]{PhysRevA.62.022310}%
  \BibitemOpen
  \bibfield  {author} {\bibinfo {author} {\bibfnamefont {S.}~\bibnamefont
  {Ishizaka}}\ and\ \bibinfo {author} {\bibfnamefont {T.}~\bibnamefont
  {Hiroshima}},\ }\href {\doibase 10.1103/PhysRevA.62.022310} {\bibfield
  {journal} {\bibinfo  {journal} {Phys. Rev. A}\ }\textbf {\bibinfo {volume}
  {62}},\ \bibinfo {pages} {022310} (\bibinfo {year} {2000})}\BibitemShut
  {NoStop}%
\bibitem [{\citenamefont {Verstraete}\ \emph {et~al.}(2001)\citenamefont
  {Verstraete}, \citenamefont {Audenaert},\ and\ \citenamefont
  {De~Moor}}]{MaxEntMixed2QubitStatesPRA01}%
  \BibitemOpen
  \bibfield  {author} {\bibinfo {author} {\bibfnamefont {F.}~\bibnamefont
  {Verstraete}}, \bibinfo {author} {\bibfnamefont {K.}~\bibnamefont
  {Audenaert}}, \ and\ \bibinfo {author} {\bibfnamefont {B.}~\bibnamefont
  {De~Moor}},\ }\href {\doibase 10.1103/PhysRevA.64.012316} {\bibfield
  {journal} {\bibinfo  {journal} {Phys. Rev. A}\ }\textbf {\bibinfo {volume}
  {64}},\ \bibinfo {pages} {012316} (\bibinfo {year} {2001})}\BibitemShut
  {NoStop}%
\bibitem [{\citenamefont {Munro}\ \emph {et~al.}(2001)\citenamefont {Munro},
  \citenamefont {James}, \citenamefont {White},\ and\ \citenamefont
  {Kwiat}}]{PhysRevA.64.030302}%
  \BibitemOpen
  \bibfield  {author} {\bibinfo {author} {\bibfnamefont {W.~J.}\ \bibnamefont
  {Munro}}, \bibinfo {author} {\bibfnamefont {D.~F.~V.}\ \bibnamefont {James}},
  \bibinfo {author} {\bibfnamefont {A.~G.}\ \bibnamefont {White}}, \ and\
  \bibinfo {author} {\bibfnamefont {P.~G.}\ \bibnamefont {Kwiat}},\ }\href
  {\doibase 10.1103/PhysRevA.64.030302} {\bibfield  {journal} {\bibinfo
  {journal} {Phys. Rev. A}\ }\textbf {\bibinfo {volume} {64}},\ \bibinfo
  {pages} {030302} (\bibinfo {year} {2001})}\BibitemShut {NoStop}%
\bibitem [{\citenamefont {Wei}\ \emph {et~al.}(2003)\citenamefont {Wei},
  \citenamefont {Nemoto}, \citenamefont {Goldbart}, \citenamefont {Kwiat},
  \citenamefont {Munro},\ and\ \citenamefont
  {Verstraete}}]{PhysRevA.67.022110}%
  \BibitemOpen
  \bibfield  {author} {\bibinfo {author} {\bibfnamefont {T.-C.}\ \bibnamefont
  {Wei}}, \bibinfo {author} {\bibfnamefont {K.}~\bibnamefont {Nemoto}},
  \bibinfo {author} {\bibfnamefont {P.~M.}\ \bibnamefont {Goldbart}}, \bibinfo
  {author} {\bibfnamefont {P.~G.}\ \bibnamefont {Kwiat}}, \bibinfo {author}
  {\bibfnamefont {W.~J.}\ \bibnamefont {Munro}}, \ and\ \bibinfo {author}
  {\bibfnamefont {F.}~\bibnamefont {Verstraete}},\ }\href {\doibase
  10.1103/PhysRevA.67.022110} {\bibfield  {journal} {\bibinfo  {journal} {Phys.
  Rev. A}\ }\textbf {\bibinfo {volume} {67}},\ \bibinfo {pages} {022110}
  (\bibinfo {year} {2003})}\BibitemShut {NoStop}%
\bibitem [{\citenamefont {Singh}\ \emph {et~al.}(2015)\citenamefont {Singh},
  \citenamefont {Bera}, \citenamefont {Dhar},\ and\ \citenamefont
  {Pati}}]{Singh2015}%
  \BibitemOpen
  \bibfield  {author} {\bibinfo {author} {\bibfnamefont {U.}~\bibnamefont
  {Singh}}, \bibinfo {author} {\bibfnamefont {M.~N.}\ \bibnamefont {Bera}},
  \bibinfo {author} {\bibfnamefont {H.~S.}\ \bibnamefont {Dhar}}, \ and\
  \bibinfo {author} {\bibfnamefont {A.~K.}\ \bibnamefont {Pati}},\ }\href
  {\doibase 10.1103/PhysRevA.91.052115} {\bibfield  {journal} {\bibinfo
  {journal} {Phys. Rev. A}\ }\textbf {\bibinfo {volume} {91}},\ \bibinfo
  {pages} {052115} (\bibinfo {year} {2015})}\BibitemShut {NoStop}%
\bibitem [{\citenamefont {Yao}\ \emph {et~al.}(2016)\citenamefont {Yao},
  \citenamefont {Dong}, \citenamefont {Ge}, \citenamefont {Li},\ and\
  \citenamefont {Sun}}]{Yao2016}%
  \BibitemOpen
  \bibfield  {author} {\bibinfo {author} {\bibfnamefont {Y.}~\bibnamefont
  {Yao}}, \bibinfo {author} {\bibfnamefont {G.~H.}\ \bibnamefont {Dong}},
  \bibinfo {author} {\bibfnamefont {L.}~\bibnamefont {Ge}}, \bibinfo {author}
  {\bibfnamefont {M.}~\bibnamefont {Li}}, \ and\ \bibinfo {author}
  {\bibfnamefont {C.~P.}\ \bibnamefont {Sun}},\ }\href {\doibase
  10.1103/PhysRevA.94.062339} {\bibfield  {journal} {\bibinfo  {journal} {Phys.
  Rev. A}\ }\textbf {\bibinfo {volume} {94}},\ \bibinfo {pages} {062339}
  (\bibinfo {year} {2016})}\BibitemShut {NoStop}%
\bibitem [{\citenamefont {Alberti}\ and\ \citenamefont
  {Uhlmann}(1982)}]{UhlmannStochBook}%
  \BibitemOpen
  \bibfield  {author} {\bibinfo {author} {\bibfnamefont {P.}~\bibnamefont
  {Alberti}}\ and\ \bibinfo {author} {\bibfnamefont {A.}~\bibnamefont
  {Uhlmann}},\ }\href@noop {} {\emph {\bibinfo {title} {Stochasticity and
  Partial Order}}}\ (\bibinfo  {publisher} {Springer Netherlands},\ \bibinfo
  {year} {1982})\BibitemShut {NoStop}%
\bibitem [{\citenamefont {Nielsen}(2002)}]{NielsenLectureMajorization}%
  \BibitemOpen
  \bibfield  {author} {\bibinfo {author} {\bibfnamefont {M.~A.}\ \bibnamefont
  {Nielsen}},\ }\href {http://michaelnielsen.org/blog/talks/2002/maj/book.ps}
  {\emph {\bibinfo {title} {An introduction to majorization and its application
  to quantum mechanics}}}\ (\bibinfo  {publisher} {lecture notes, University of
  Queensland, Australia},\ \bibinfo {year} {2002})\BibitemShut {NoStop}%
\bibitem [{\citenamefont {Uhlmann}(1970)}]{Uhlmann1970}%
  \BibitemOpen
  \bibfield  {author} {\bibinfo {author} {\bibfnamefont {A.}~\bibnamefont
  {Uhlmann}},\ }\href {\doibase 10.1016/0034-4877(70)90009-1} {\bibfield
  {journal} {\bibinfo  {journal} {Rep. Math. Phys.}\ }\textbf {\bibinfo
  {volume} {1}},\ \bibinfo {pages} {147 } (\bibinfo {year} {1970})}\BibitemShut
  {NoStop}%
\bibitem [{Note5()}]{Note5}%
  \BibitemOpen
  \bibinfo {note} {Note that the dimensions $d_{1}$ and $d_{2}$ in Eq.~(\ref
  {eq:distillablepurity}) are arbitrary finite numbers, up to the requirement
  that $d\times d_{2}=2^{m}\times d_{1}$. This guarantees that the unital
  operation preserves the dimension of the Hilbert space. As we show in
  Appendix~\ref {sec:Proof-4}, the optimal choice is $d_{1}=d$ and
  $d_{2}=2^{\left \delimiter "4262304 \protect \qopname \relax
  o{log}_{2}(d/r)\right \delimiter "5263305 }$, where $r$ is the rank of $\rho
  $. This only applies if $\protect \qopname \relax o{log}_{2}(d/r)\geq 1$, as
  single-shot purity distillation does not work otherwise. By similar
  considerations, the optimal choice of dimensions in Eq.~(\ref
  {eq:puritycost}) is $d_{1}=d$ and d$_{2}=2^{\left \delimiter "4264306
  \protect \qopname \relax o{log}_{2}(d\lambda _{\protect \qopname \relax
  m{max}})\right \delimiter "5265307 }$, where $\lambda _{\protect \qopname
  \relax m{max}}$ is the maximal eigenvalue of $\rho $, see Appendix~\ref
  {sec:Proof-5} for more details.}\BibitemShut {Stop}%
\bibitem [{\citenamefont {Modi}\ \emph {et~al.}(2010)\citenamefont {Modi},
  \citenamefont {Paterek}, \citenamefont {Son}, \citenamefont {Vedral},\ and\
  \citenamefont {Williamson}}]{PhysRevLett.104.080501}%
  \BibitemOpen
  \bibfield  {author} {\bibinfo {author} {\bibfnamefont {K.}~\bibnamefont
  {Modi}}, \bibinfo {author} {\bibfnamefont {T.}~\bibnamefont {Paterek}},
  \bibinfo {author} {\bibfnamefont {W.}~\bibnamefont {Son}}, \bibinfo {author}
  {\bibfnamefont {V.}~\bibnamefont {Vedral}}, \ and\ \bibinfo {author}
  {\bibfnamefont {M.}~\bibnamefont {Williamson}},\ }\href {\doibase
  10.1103/PhysRevLett.104.080501} {\bibfield  {journal} {\bibinfo  {journal}
  {Phys. Rev. Lett.}\ }\textbf {\bibinfo {volume} {104}},\ \bibinfo {pages}
  {080501} (\bibinfo {year} {2010})}\BibitemShut {NoStop}%
\bibitem [{\citenamefont {Bengtsson}\ and\ \citenamefont
  {Zyczkowski}(2006)}]{Bengtsson2007}%
  \BibitemOpen
  \bibfield  {author} {\bibinfo {author} {\bibfnamefont {I.}~\bibnamefont
  {Bengtsson}}\ and\ \bibinfo {author} {\bibfnamefont {K.}~\bibnamefont
  {Zyczkowski}},\ }\href {\doibase 10.1017/CBO9780511535048} {\emph {\bibinfo
  {title} {Geometry of Quantum States: An Introduction to Quantum
  Entanglement}}}\ (\bibinfo  {publisher} {Cambridge University Press},\
  \bibinfo {year} {2006})\BibitemShut {NoStop}%
\bibitem [{Note6()}]{Note6}%
  \BibitemOpen
  \bibinfo {note} {Notice that $\protect \mathrm {Tr}[\rho ^{2}]-1/d$
  corresponds to the purity monotone obtained from the squared Schatten 2-norm
  $||\rho -\protect \openone /d||_{2}^{2}$.}\BibitemShut {Stop}%
\bibitem [{\citenamefont {Ekert}\ \emph {et~al.}(2002)\citenamefont {Ekert},
  \citenamefont {Alves}, \citenamefont {Oi}, \citenamefont {Horodecki},
  \citenamefont {Horodecki},\ and\ \citenamefont {Kwek}}]{Ekert2002}%
  \BibitemOpen
  \bibfield  {author} {\bibinfo {author} {\bibfnamefont {A.~K.}\ \bibnamefont
  {Ekert}}, \bibinfo {author} {\bibfnamefont {C.~M.}\ \bibnamefont {Alves}},
  \bibinfo {author} {\bibfnamefont {D.~K.~L.}\ \bibnamefont {Oi}}, \bibinfo
  {author} {\bibfnamefont {M.}~\bibnamefont {Horodecki}}, \bibinfo {author}
  {\bibfnamefont {P.}~\bibnamefont {Horodecki}}, \ and\ \bibinfo {author}
  {\bibfnamefont {L.~C.}\ \bibnamefont {Kwek}},\ }\href {\doibase
  10.1103/PhysRevLett.88.217901} {\bibfield  {journal} {\bibinfo  {journal}
  {Phys. Rev. Lett.}\ }\textbf {\bibinfo {volume} {88}},\ \bibinfo {pages}
  {217901} (\bibinfo {year} {2002})}\BibitemShut {NoStop}%
\bibitem [{\citenamefont {Pichler}\ \emph {et~al.}(2013)\citenamefont
  {Pichler}, \citenamefont {Bonnes}, \citenamefont {Daley}, \citenamefont
  {L\"auchli},\ and\ \citenamefont {Zoller}}]{Pichler2013}%
  \BibitemOpen
  \bibfield  {author} {\bibinfo {author} {\bibfnamefont {H.}~\bibnamefont
  {Pichler}}, \bibinfo {author} {\bibfnamefont {L.}~\bibnamefont {Bonnes}},
  \bibinfo {author} {\bibfnamefont {A.~J.}\ \bibnamefont {Daley}}, \bibinfo
  {author} {\bibfnamefont {A.~M.}\ \bibnamefont {L\"auchli}}, \ and\ \bibinfo
  {author} {\bibfnamefont {P.}~\bibnamefont {Zoller}},\ }\href {\doibase
  10.1088/1367-2630/15/6/063003} {\bibfield  {journal} {\bibinfo  {journal}
  {New J. Phys.}\ }\textbf {\bibinfo {volume} {15}},\ \bibinfo {pages} {063003}
  (\bibinfo {year} {2013})}\BibitemShut {NoStop}%
\bibitem [{\citenamefont {Islam}\ \emph {et~al.}(2015)\citenamefont {Islam},
  \citenamefont {Ma}, \citenamefont {Preiss}, \citenamefont {Tai},
  \citenamefont {Lukin}, \citenamefont {Rispoli},\ and\ \citenamefont
  {Greiner}}]{Islam2015}%
  \BibitemOpen
  \bibfield  {author} {\bibinfo {author} {\bibfnamefont {R.}~\bibnamefont
  {Islam}}, \bibinfo {author} {\bibfnamefont {R.}~\bibnamefont {Ma}}, \bibinfo
  {author} {\bibfnamefont {P.~M.}\ \bibnamefont {Preiss}}, \bibinfo {author}
  {\bibfnamefont {M.~E.}\ \bibnamefont {Tai}}, \bibinfo {author} {\bibfnamefont
  {A.}~\bibnamefont {Lukin}}, \bibinfo {author} {\bibfnamefont
  {M.}~\bibnamefont {Rispoli}}, \ and\ \bibinfo {author} {\bibfnamefont
  {M.}~\bibnamefont {Greiner}},\ }\href {\doibase 10.1038/nature15750}
  {\bibfield  {journal} {\bibinfo  {journal} {Nature}\ }\textbf {\bibinfo
  {volume} {528}},\ \bibinfo {pages} {77} (\bibinfo {year} {2015})}\BibitemShut
  {NoStop}%
\bibitem [{\citenamefont {\ifmmode~\dot{Z}\else \.{Z}\fi{}yczkowski}\ \emph
  {et~al.}(1998)\citenamefont {\ifmmode~\dot{Z}\else \.{Z}\fi{}yczkowski},
  \citenamefont {Horodecki}, \citenamefont {Sanpera},\ and\ \citenamefont
  {Lewenstein}}]{PhysRevA.58.883}%
  \BibitemOpen
  \bibfield  {author} {\bibinfo {author} {\bibfnamefont {K.}~\bibnamefont
  {\ifmmode~\dot{Z}\else \.{Z}\fi{}yczkowski}}, \bibinfo {author}
  {\bibfnamefont {P.}~\bibnamefont {Horodecki}}, \bibinfo {author}
  {\bibfnamefont {A.}~\bibnamefont {Sanpera}}, \ and\ \bibinfo {author}
  {\bibfnamefont {M.}~\bibnamefont {Lewenstein}},\ }\href {\doibase
  10.1103/PhysRevA.58.883} {\bibfield  {journal} {\bibinfo  {journal} {Phys.
  Rev. A}\ }\textbf {\bibinfo {volume} {58}},\ \bibinfo {pages} {883} (\bibinfo
  {year} {1998})}\BibitemShut {NoStop}%
\bibitem [{\citenamefont {Gurvits}\ and\ \citenamefont
  {Barnum}(2002)}]{PhysRevA.66.062311}%
  \BibitemOpen
  \bibfield  {author} {\bibinfo {author} {\bibfnamefont {L.}~\bibnamefont
  {Gurvits}}\ and\ \bibinfo {author} {\bibfnamefont {H.}~\bibnamefont
  {Barnum}},\ }\href {\doibase 10.1103/PhysRevA.66.062311} {\bibfield
  {journal} {\bibinfo  {journal} {Phys. Rev. A}\ }\textbf {\bibinfo {volume}
  {66}},\ \bibinfo {pages} {062311} (\bibinfo {year} {2002})}\BibitemShut
  {NoStop}%
\bibitem [{\citenamefont {Gurvits}\ and\ \citenamefont
  {Barnum}(2003)}]{Gurvits2003}%
  \BibitemOpen
  \bibfield  {author} {\bibinfo {author} {\bibfnamefont {L.}~\bibnamefont
  {Gurvits}}\ and\ \bibinfo {author} {\bibfnamefont {H.}~\bibnamefont
  {Barnum}},\ }\href {\doibase 10.1103/PhysRevA.68.042312} {\bibfield
  {journal} {\bibinfo  {journal} {Phys. Rev. A}\ }\textbf {\bibinfo {volume}
  {68}},\ \bibinfo {pages} {042312} (\bibinfo {year} {2003})}\BibitemShut
  {NoStop}%
\bibitem [{\citenamefont {Mintert}\ and\ \citenamefont
  {Buchleitner}(2007)}]{Mintert2007}%
  \BibitemOpen
  \bibfield  {author} {\bibinfo {author} {\bibfnamefont {F.}~\bibnamefont
  {Mintert}}\ and\ \bibinfo {author} {\bibfnamefont {A.}~\bibnamefont
  {Buchleitner}},\ }\href {\doibase 10.1103/PhysRevLett.98.140505} {\bibfield
  {journal} {\bibinfo  {journal} {Phys. Rev. Lett.}\ }\textbf {\bibinfo
  {volume} {98}},\ \bibinfo {pages} {140505} (\bibinfo {year}
  {2007})}\BibitemShut {NoStop}%
\bibitem [{\citenamefont {Gurvits}\ and\ \citenamefont
  {Barnum}(2005)}]{GurvitsPhysRevA.72.032322}%
  \BibitemOpen
  \bibfield  {author} {\bibinfo {author} {\bibfnamefont {L.}~\bibnamefont
  {Gurvits}}\ and\ \bibinfo {author} {\bibfnamefont {H.}~\bibnamefont
  {Barnum}},\ }\href {\doibase 10.1103/PhysRevA.72.032322} {\bibfield
  {journal} {\bibinfo  {journal} {Phys. Rev. A}\ }\textbf {\bibinfo {volume}
  {72}},\ \bibinfo {pages} {032322} (\bibinfo {year} {2005})}\BibitemShut
  {NoStop}%
\bibitem [{\citenamefont {Hildebrand}(2007)}]{HildebrandPhysRevA.75.062330}%
  \BibitemOpen
  \bibfield  {author} {\bibinfo {author} {\bibfnamefont {R.}~\bibnamefont
  {Hildebrand}},\ }\href {\doibase 10.1103/PhysRevA.75.062330} {\bibfield
  {journal} {\bibinfo  {journal} {Phys. Rev. A}\ }\textbf {\bibinfo {volume}
  {75}},\ \bibinfo {pages} {062330} (\bibinfo {year} {2007})}\BibitemShut
  {NoStop}%
\bibitem [{\citenamefont {Wiseman}\ \emph {et~al.}(2007)\citenamefont
  {Wiseman}, \citenamefont {Jones},\ and\ \citenamefont
  {Doherty}}]{Wiseman2007}%
  \BibitemOpen
  \bibfield  {author} {\bibinfo {author} {\bibfnamefont {H.~M.}\ \bibnamefont
  {Wiseman}}, \bibinfo {author} {\bibfnamefont {S.~J.}\ \bibnamefont {Jones}},
  \ and\ \bibinfo {author} {\bibfnamefont {A.~C.}\ \bibnamefont {Doherty}},\
  }\href {\doibase 10.1103/PhysRevLett.98.140402} {\bibfield  {journal}
  {\bibinfo  {journal} {Phys. Rev. Lett.}\ }\textbf {\bibinfo {volume} {98}},\
  \bibinfo {pages} {140402} (\bibinfo {year} {2007})}\BibitemShut {NoStop}%
\bibitem [{\citenamefont {Yao}\ \emph {et~al.}(2015)\citenamefont {Yao},
  \citenamefont {Xiao}, \citenamefont {Ge},\ and\ \citenamefont
  {Sun}}]{PhysRevA.92.022112}%
  \BibitemOpen
  \bibfield  {author} {\bibinfo {author} {\bibfnamefont {Y.}~\bibnamefont
  {Yao}}, \bibinfo {author} {\bibfnamefont {X.}~\bibnamefont {Xiao}}, \bibinfo
  {author} {\bibfnamefont {L.}~\bibnamefont {Ge}}, \ and\ \bibinfo {author}
  {\bibfnamefont {C.~P.}\ \bibnamefont {Sun}},\ }\href {\doibase
  10.1103/PhysRevA.92.022112} {\bibfield  {journal} {\bibinfo  {journal} {Phys.
  Rev. A}\ }\textbf {\bibinfo {volume} {92}},\ \bibinfo {pages} {022112}
  (\bibinfo {year} {2015})}\BibitemShut {NoStop}%
\bibitem [{\citenamefont {Ku\ifmmode~\acute{s}\else \'{s}\fi{}}\ and\
  \citenamefont {\ifmmode~\dot{Z}\else
  \.{Z}\fi{}yczkowski}(2001)}]{KusPhysRevA.63.032307}%
  \BibitemOpen
  \bibfield  {author} {\bibinfo {author} {\bibfnamefont {M.}~\bibnamefont
  {Ku\ifmmode~\acute{s}\else \'{s}\fi{}}}\ and\ \bibinfo {author}
  {\bibfnamefont {K.}~\bibnamefont {\ifmmode~\dot{Z}\else
  \.{Z}\fi{}yczkowski}},\ }\href {\doibase 10.1103/PhysRevA.63.032307}
  {\bibfield  {journal} {\bibinfo  {journal} {Phys. Rev. A}\ }\textbf {\bibinfo
  {volume} {63}},\ \bibinfo {pages} {032307} (\bibinfo {year}
  {2001})}\BibitemShut {NoStop}%
\bibitem [{\citenamefont {Jivulescu}\ \emph {et~al.}(2015)\citenamefont
  {Jivulescu}, \citenamefont {Lupa}, \citenamefont {Nechita},\ and\
  \citenamefont {Reeb}}]{Jivulescu2015276}%
  \BibitemOpen
  \bibfield  {author} {\bibinfo {author} {\bibfnamefont {M.~A.}\ \bibnamefont
  {Jivulescu}}, \bibinfo {author} {\bibfnamefont {N.}~\bibnamefont {Lupa}},
  \bibinfo {author} {\bibfnamefont {I.}~\bibnamefont {Nechita}}, \ and\
  \bibinfo {author} {\bibfnamefont {D.}~\bibnamefont {Reeb}},\ }\href {\doibase
  10.1016/j.laa.2014.11.031} {\bibfield  {journal} {\bibinfo  {journal} {Lin.
  Alg. Appl.}\ }\textbf {\bibinfo {volume} {469}},\ \bibinfo {pages} {276 }
  (\bibinfo {year} {2015})}\BibitemShut {NoStop}%
\bibitem [{\citenamefont {Filippov}\ \emph {et~al.}(2017)\citenamefont
  {Filippov}, \citenamefont {Magadov},\ and\ \citenamefont
  {Jivulescu}}]{Filippov1367-2630-19-8-083010}%
  \BibitemOpen
  \bibfield  {author} {\bibinfo {author} {\bibfnamefont {S.~N.}\ \bibnamefont
  {Filippov}}, \bibinfo {author} {\bibfnamefont {K.~Y.}\ \bibnamefont
  {Magadov}}, \ and\ \bibinfo {author} {\bibfnamefont {M.~A.}\ \bibnamefont
  {Jivulescu}},\ }\href {\doibase 10.1088/1367-2630/aa7e06} {\bibfield
  {journal} {\bibinfo  {journal} {New J. Phys.}\ }\textbf {\bibinfo {volume}
  {19}},\ \bibinfo {pages} {083010} (\bibinfo {year} {2017})}\BibitemShut
  {NoStop}%
\bibitem [{\citenamefont {Galve}\ \emph {et~al.}(2011)\citenamefont {Galve},
  \citenamefont {Giorgi},\ and\ \citenamefont
  {Zambrini}}]{MaxDiscMixed2Qubits}%
  \BibitemOpen
  \bibfield  {author} {\bibinfo {author} {\bibfnamefont {F.}~\bibnamefont
  {Galve}}, \bibinfo {author} {\bibfnamefont {G.~L.}\ \bibnamefont {Giorgi}}, \
  and\ \bibinfo {author} {\bibfnamefont {R.}~\bibnamefont {Zambrini}},\ }\href
  {\doibase 10.1103/PhysRevA.83.012102} {\bibfield  {journal} {\bibinfo
  {journal} {Phys. Rev. A}\ }\textbf {\bibinfo {volume} {83}},\ \bibinfo
  {pages} {012102} (\bibinfo {year} {2011})}\BibitemShut {NoStop}%
\bibitem [{\citenamefont {Qasimi}\ and\ \citenamefont
  {James}(2011)}]{PhysRevA.83.032101}%
  \BibitemOpen
  \bibfield  {author} {\bibinfo {author} {\bibfnamefont {A.~A.}\ \bibnamefont
  {Qasimi}}\ and\ \bibinfo {author} {\bibfnamefont {D.~F.~V.}\ \bibnamefont
  {James}},\ }\href {\doibase 10.1103/PhysRevA.83.032101} {\bibfield  {journal}
  {\bibinfo  {journal} {Phys. Rev. A}\ }\textbf {\bibinfo {volume} {83}},\
  \bibinfo {pages} {032101} (\bibinfo {year} {2011})}\BibitemShut {NoStop}%
\bibitem [{\citenamefont {Jevtic}\ \emph {et~al.}(2012)\citenamefont {Jevtic},
  \citenamefont {Jennings},\ and\ \citenamefont {Rudolph}}]{Jevtic2012}%
  \BibitemOpen
  \bibfield  {author} {\bibinfo {author} {\bibfnamefont {S.}~\bibnamefont
  {Jevtic}}, \bibinfo {author} {\bibfnamefont {D.}~\bibnamefont {Jennings}}, \
  and\ \bibinfo {author} {\bibfnamefont {T.}~\bibnamefont {Rudolph}},\ }\href
  {\doibase 10.1103/PhysRevLett.108.110403} {\bibfield  {journal} {\bibinfo
  {journal} {Phys. Rev. Lett.}\ }\textbf {\bibinfo {volume} {108}},\ \bibinfo
  {pages} {110403} (\bibinfo {year} {2012})}\BibitemShut {NoStop}%
\bibitem [{\citenamefont {Monz}\ \emph {et~al.}(2011)\citenamefont {Monz},
  \citenamefont {Schindler}, \citenamefont {Barreiro}, \citenamefont {Chwalla},
  \citenamefont {Nigg}, \citenamefont {Coish}, \citenamefont {Harlander},
  \citenamefont {H\"ansel}, \citenamefont {Hennrich},\ and\ \citenamefont
  {Blatt}}]{Monz2011}%
  \BibitemOpen
  \bibfield  {author} {\bibinfo {author} {\bibfnamefont {T.}~\bibnamefont
  {Monz}}, \bibinfo {author} {\bibfnamefont {P.}~\bibnamefont {Schindler}},
  \bibinfo {author} {\bibfnamefont {J.~T.}\ \bibnamefont {Barreiro}}, \bibinfo
  {author} {\bibfnamefont {M.}~\bibnamefont {Chwalla}}, \bibinfo {author}
  {\bibfnamefont {D.}~\bibnamefont {Nigg}}, \bibinfo {author} {\bibfnamefont
  {W.~A.}\ \bibnamefont {Coish}}, \bibinfo {author} {\bibfnamefont
  {M.}~\bibnamefont {Harlander}}, \bibinfo {author} {\bibfnamefont
  {W.}~\bibnamefont {H\"ansel}}, \bibinfo {author} {\bibfnamefont
  {M.}~\bibnamefont {Hennrich}}, \ and\ \bibinfo {author} {\bibfnamefont
  {R.}~\bibnamefont {Blatt}},\ }\href {\doibase 10.1103/PhysRevLett.106.130506}
  {\bibfield  {journal} {\bibinfo  {journal} {Phys. Rev. Lett.}\ }\textbf
  {\bibinfo {volume} {106}},\ \bibinfo {pages} {130506} (\bibinfo {year}
  {2011})}\BibitemShut {NoStop}%
\bibitem [{\citenamefont {Qian}\ \emph {et~al.}(2016)\citenamefont {Qian},
  \citenamefont {Gu}, \citenamefont {Cao}, \citenamefont {Wen}, \citenamefont
  {Ou}, \citenamefont {Chen},\ and\ \citenamefont {Zhang}}]{Qian2016}%
  \BibitemOpen
  \bibfield  {author} {\bibinfo {author} {\bibfnamefont {P.}~\bibnamefont
  {Qian}}, \bibinfo {author} {\bibfnamefont {Z.}~\bibnamefont {Gu}}, \bibinfo
  {author} {\bibfnamefont {R.}~\bibnamefont {Cao}}, \bibinfo {author}
  {\bibfnamefont {R.}~\bibnamefont {Wen}}, \bibinfo {author} {\bibfnamefont
  {Z.~Y.}\ \bibnamefont {Ou}}, \bibinfo {author} {\bibfnamefont {J.~F.}\
  \bibnamefont {Chen}}, \ and\ \bibinfo {author} {\bibfnamefont
  {W.}~\bibnamefont {Zhang}},\ }\href {\doibase 10.1103/PhysRevLett.117.013602}
  {\bibfield  {journal} {\bibinfo  {journal} {Phys. Rev. Lett.}\ }\textbf
  {\bibinfo {volume} {117}},\ \bibinfo {pages} {013602} (\bibinfo {year}
  {2016})}\BibitemShut {NoStop}%
\bibitem [{Note7()}]{Note7}%
  \BibitemOpen
  \bibinfo {note} {This is a direct consequence of the fact that nonselective
  von Neumann measurements are unital operations.}\BibitemShut {Stop}%
\bibitem [{\citenamefont {P\'{e}rez-Garc\'{i}a}\ \emph
  {et~al.}(2006)\citenamefont {P\'{e}rez-Garc\'{i}a}, \citenamefont {Wolf},
  \citenamefont {Petz},\ and\ \citenamefont {Ruskai}}]{Contractivity}%
  \BibitemOpen
  \bibfield  {author} {\bibinfo {author} {\bibfnamefont {D.}~\bibnamefont
  {P\'{e}rez-Garc\'{i}a}}, \bibinfo {author} {\bibfnamefont {M.~M.}\
  \bibnamefont {Wolf}}, \bibinfo {author} {\bibfnamefont {D.}~\bibnamefont
  {Petz}}, \ and\ \bibinfo {author} {\bibfnamefont {M.~B.}\ \bibnamefont
  {Ruskai}},\ }\href {\doibase 10.1063/1.2218675} {\bibfield  {journal}
  {\bibinfo  {journal} {J. Math. Phys.}\ }\textbf {\bibinfo {volume} {47}},\
  \bibinfo {pages} {083506} (\bibinfo {year} {2006})}\BibitemShut {NoStop}%
\bibitem [{\citenamefont {Marshall}\ \emph {et~al.}(2011)\citenamefont
  {Marshall}, \citenamefont {Olkin},\ and\ \citenamefont {Arnold}}]{Olkin2011}%
  \BibitemOpen
  \bibfield  {author} {\bibinfo {author} {\bibfnamefont {A.~W.}\ \bibnamefont
  {Marshall}}, \bibinfo {author} {\bibfnamefont {I.}~\bibnamefont {Olkin}}, \
  and\ \bibinfo {author} {\bibfnamefont {B.~C.}\ \bibnamefont {Arnold}},\
  }\href {\doibase 10.1007/978-0-387-68276-1} {\emph {\bibinfo {title}
  {{Inequalities: Theory of Majorization and Its Applications}}}},\ \bibinfo
  {edition} {2nd}\ ed.\ (\bibinfo  {publisher} {Springer-Verlag New York},\
  \bibinfo {year} {2011})\BibitemShut {NoStop}%
\bibitem [{\citenamefont {Ben-Bassat}\ and\ \citenamefont
  {Raviv}(1978)}]{RenyiEntrConcFiniteAlp78}%
  \BibitemOpen
  \bibfield  {author} {\bibinfo {author} {\bibfnamefont {M.}~\bibnamefont
  {Ben-Bassat}}\ and\ \bibinfo {author} {\bibfnamefont {J.}~\bibnamefont
  {Raviv}},\ }\href {\doibase 10.1109/TIT.1978.1055890} {\bibfield  {journal}
  {\bibinfo  {journal} {IEEE Trans. Info. Theory}\ }\textbf {\bibinfo {volume}
  {24}},\ \bibinfo {pages} {324} (\bibinfo {year} {1978})}\BibitemShut
  {NoStop}%
\bibitem [{\citenamefont {Orieux}\ \emph {et~al.}(2015)\citenamefont {Orieux},
  \citenamefont {Ciampini}, \citenamefont {Mataloni}, \citenamefont {Bru\ss{}},
  \citenamefont {Rossi},\ and\ \citenamefont {Macchiavello}}]{Orieux2015}%
  \BibitemOpen
  \bibfield  {author} {\bibinfo {author} {\bibfnamefont {A.}~\bibnamefont
  {Orieux}}, \bibinfo {author} {\bibfnamefont {M.~A.}\ \bibnamefont
  {Ciampini}}, \bibinfo {author} {\bibfnamefont {P.}~\bibnamefont {Mataloni}},
  \bibinfo {author} {\bibfnamefont {D.}~\bibnamefont {Bru\ss{}}}, \bibinfo
  {author} {\bibfnamefont {M.}~\bibnamefont {Rossi}}, \ and\ \bibinfo {author}
  {\bibfnamefont {C.}~\bibnamefont {Macchiavello}},\ }\href {\doibase
  10.1103/PhysRevLett.115.160503} {\bibfield  {journal} {\bibinfo  {journal}
  {Phys. Rev. Lett.}\ }\textbf {\bibinfo {volume} {115}},\ \bibinfo {pages}
  {160503} (\bibinfo {year} {2015})}\BibitemShut {NoStop}%
\bibitem [{\citenamefont {Peres}(1996)}]{Peres1996}%
  \BibitemOpen
  \bibfield  {author} {\bibinfo {author} {\bibfnamefont {A.}~\bibnamefont
  {Peres}},\ }\href {\doibase 10.1103/PhysRevLett.77.1413} {\bibfield
  {journal} {\bibinfo  {journal} {Phys. Rev. Lett.}\ }\textbf {\bibinfo
  {volume} {77}},\ \bibinfo {pages} {1413} (\bibinfo {year}
  {1996})}\BibitemShut {NoStop}%
\bibitem [{\citenamefont {Horodecki}\ \emph {et~al.}(1996)\citenamefont
  {Horodecki}, \citenamefont {Horodecki},\ and\ \citenamefont
  {Horodecki}}]{Horodecki1996}%
  \BibitemOpen
  \bibfield  {author} {\bibinfo {author} {\bibfnamefont {P.}~\bibnamefont
  {Horodecki}}, \bibinfo {author} {\bibfnamefont {R.}~\bibnamefont
  {Horodecki}}, \ and\ \bibinfo {author} {\bibfnamefont {M.}~\bibnamefont
  {Horodecki}},\ }\href {\doibase 10.1016/S0375-9601(96)00706-2} {\bibfield
  {journal} {\bibinfo  {journal} {Phys. Lett. A}\ }\textbf {\bibinfo {volume}
  {223}},\ \bibinfo {pages} {1} (\bibinfo {year} {1996})}\BibitemShut {NoStop}%
\bibitem [{\citenamefont {Vidal}\ and\ \citenamefont
  {Werner}(2002)}]{Vidal2002}%
  \BibitemOpen
  \bibfield  {author} {\bibinfo {author} {\bibfnamefont {G.}~\bibnamefont
  {Vidal}}\ and\ \bibinfo {author} {\bibfnamefont {R.~F.}\ \bibnamefont
  {Werner}},\ }\href {\doibase 10.1103/PhysRevA.65.032314} {\bibfield
  {journal} {\bibinfo  {journal} {Phys. Rev. A}\ }\textbf {\bibinfo {volume}
  {65}},\ \bibinfo {pages} {032314} (\bibinfo {year} {2002})}\BibitemShut
  {NoStop}%
\bibitem [{\citenamefont {Nakano}\ \emph {et~al.}(2013)\citenamefont {Nakano},
  \citenamefont {Piani},\ and\ \citenamefont {Adesso}}]{Nakano2013}%
  \BibitemOpen
  \bibfield  {author} {\bibinfo {author} {\bibfnamefont {T.}~\bibnamefont
  {Nakano}}, \bibinfo {author} {\bibfnamefont {M.}~\bibnamefont {Piani}}, \
  and\ \bibinfo {author} {\bibfnamefont {G.}~\bibnamefont {Adesso}},\ }\href
  {\doibase 10.1103/PhysRevA.88.012117} {\bibfield  {journal} {\bibinfo
  {journal} {Phys. Rev. A}\ }\textbf {\bibinfo {volume} {88}},\ \bibinfo
  {pages} {012117} (\bibinfo {year} {2013})}\BibitemShut {NoStop}%
\end{thebibliography}%

\appendix

\section{\label{sec:Proof-2}Proof of Theorem \ref{thm:Cmax}}

Here we will prove that  
\begin{equation}
\coh_{\max}(\rho)=\sup_{U}\coh(U\rho U^{\dagger})=\max_{U}\coh(U\rho U^{\dagger})=D(\rho,\openone/d)
\end{equation}
holds true for any distance-based coherence monotone $\coh(\rho)=\inf_{\sigma\in\mathcal{I}}D(\rho,\sigma)$
with a contractive distance $D$, i.e., 
\begin{equation}
D(\Lambda[\rho],\Lambda[\sigma])\leq D(\rho,\sigma)\label{eq:contractivity-2}
\end{equation}
for any quantum operation $\Lambda$.

Our proof will consist of two steps. In the first step, we will prove
the inequality 
\begin{align}
\coh_{\max}(\rho)\leq D\left(\rho,\openone/d\right).\label{eq:bound}
\end{align}
This follows by noting that the maximally mixed state $\openone/d$
is incoherent, and thus gives an upper bound for any distance-based
coherence monotone: $\coh(\rho)\leq D(\rho,\openone/d)$. By contractivity~(\ref{eq:contractivity-2})
the distance $D$ must be invariant under unitaries, which implies
that $\coh(U\rho U^{\dagger})\leq D(\rho,\openone/d)$ for any unitary
$U$. This completes the proof of Eq.~(\ref{eq:bound}).

To complete the proof of the theorem, we will now show the converse
inequality 
\begin{align}
\coh_{\max}(\rho)\geq D\left(\rho,\openone/d\right).
\end{align}
For this, we introduce the unitary $V$ with the property that $\rho_{\max}=V\rho V^{\dagger}$,
where $\rho_{\max}=\sum_{n}p_{n}\ket{n_{+}}\!\bra{n_{+}}$ is a maximally
coherent mixed state. By definition of $\coh_{\max}$, it must be
that $\coh_{\max}(\rho)\geq\coh(V\rho V^{\dagger})$. We further define
$\Delta_{+}$ as the dephasing operation in the maximally coherent
basis: 
\begin{align}
\Delta_{+}[\rho]=\sum_{n}\braket{n_{+}|\rho|n_{+}}\ket{n_{+}}\!\bra{n_{+}}.
\end{align}
It is important to note that the application of $\Delta_{+}$ to any
incoherent state $\sigma\in\mathcal{I}$ leads to the maximally mixed
state: $\Delta_{+}[\sigma]=\openone/d$. If we further define $\tau\in\mathcal{I}$
to be the closest incoherent state to $\rho_{\max}$, we arrive at
the following result: 
\begin{align}
\coh_{\max}(\rho) & \geq\coh(\rho_{\max})=D(\rho_{\max},\tau)\\
 & \geq D\left(\Delta_{+}[\rho_{\max}],\Delta_{+}[\tau]\right)\nonumber \\
 & =D\left(\rho_{\max},\openone/d\right)=D\left(\rho,\openone/d\right).\nonumber 
\end{align}
In the second line we used contractivity~(\ref{eq:contractivity-2})
and in the last equality we used unitary invariance of the distance
$D$. This completes the proof of the theorem.

We note that the same proof also applies for all coherence quantifiers
$\coh_{p}$ based on Schatten $p$-norms for $p\geq1$. This can be
seen by using the same arguments as above, together with the fact
that Schatten $p$-norms are contractive under unital operations for
all $p\geq1$~\cite{Contractivity}.

\section{\label{sec:Proof-3}Proof of Eq.~(\ref{eq:PC-1})}

Here, we will show that any MIO monotone $\coh$ fulfills the following
inequality: 
\begin{equation}
\sup_{\Lambda_{\mathrm{U}}}\coh(\Lambda_{\mathrm{U}}[\rho])=\max_{\Lambda_{\mathrm{U}}}\coh(\Lambda_{\mathrm{U}}[\rho])=\coh(\rho_{\max}),\label{eq:max-unital}
\end{equation}
where $\rho_{\max}=\sum_{n}p_{n}\ket{n_{+}}\!\bra{n_{+}}$ is a maximally
coherent mixed state, $\{p_{n}\}$ is the spectrum of $\rho$, and
the supremum is taken over all unital operations $\Lambda_{\mathrm{U}}$.

In the first step of the proof, we recall that unital operations are
equivalent to mixtures of unitaries with respect to state transformations,
see Lemma 10 in \cite{GourResThOpPhysRep15}. By using similar arguments
as in Appendix~\ref{sec:Proof-2}, we will now show that for any
mixture of unitaries $\Lambda_{\mathrm{MU}}[\rho]=\sum_{i}q_{i}U_{i}\rho U_{i}^{\dagger}$
there exists a maximally incoherent operation $\Lambda_{\mathrm{MIO}}$
such that 
\begin{equation}
\Lambda_{\mathrm{MU}}[\rho_{\max}]=\Lambda_{\mathrm{MIO}}[\rho_{\max}].\label{eq:MU}
\end{equation}
The desired maximally incoherent operation will be given by $\Lambda_{\mathrm{MIO}}[\rho]=\sum_{i,n}K_{i,n}\rho K_{i,n}^{\dagger}$
with Kraus operators $K_{i,n}=\sqrt{q_{i}}U_{i}\ket{n_{+}}\!\bra{n_{+}}$.
It is straightforward to verify that $\sum_{i,n}K_{i,n}\sigma K_{i,n}^{\dagger}=\openone/d$
holds for any incoherent state $\sigma$, which means that the operation
is indeed maximally incoherent. Moreover, it holds that 
\begin{equation}
\sum_{i,n}K_{i,n}\rho_{\max}K_{i,n}^{\dagger}=\sum_{i}q_{i}U_{i}\rho_{\max}U_{i}^{\dagger},
\end{equation}
which completes the proof of Eq.~(\ref{eq:MU}).

Together with Lemma 10 in \cite{GourResThOpPhysRep15}, this result
implies that for any unital operation $\Lambda_{\mathrm{U}}$ there
exists a maximally incoherent operation $\Lambda_{\mathrm{MIO}}$
such that $\Lambda_{\mathrm{U}}[\rho_{\max}]=\Lambda_{\mathrm{MIO}}[\rho_{\max}].$
To complete the proof of Eq.~(\ref{eq:max-unital}), recall that
the states $\rho$ and $\rho_{\max}$ are related via a unitary, i.e.,
$\rho=U\rho_{\max}U^{\dagger}$. This immediately implies that $\coh(\rho_{\max})\leq\sup_{\Lambda_{\mathrm{U}}}\coh(\Lambda_{\mathrm{U}}[\rho])$.
On the other hand, the results presented above imply the converse
inequality: 
\begin{align}
\coh(\rho_{\max}) & \geq\sup_{\Lambda_{\mathrm{U}}}\coh(\Lambda_{\mathrm{U}}[\rho_{\max}])=\sup_{\Lambda_{\mathrm{U}}}\coh(\Lambda_{\mathrm{U}}[\rho]),
\end{align}
where the last equality follows from the fact that $\rho$ and $\rho_{\max}$
are related via a unitary. This completes the proof.

\section{\label{sec:Proof-4}Proof of Eq.~(\ref{eq:distillablepurity-1})}

For proving the statement, let $m$ be an integer such that 
\begin{align}
\Lambda_{\mathrm{U}}\left[\rho\otimes\frac{\openone}{d_{2}}\right]=\psi_{2}^{\otimes m}\otimes\frac{\openone}{d_{1}}\label{eq:transformation-2}
\end{align}
holds true for some unital operation $\Lambda_{\mathrm{U}}$ and some
integers $d_{1}$ and $d_{2}$. Since we require that the unital operation
does not change the dimension of the system, we have the additional
constraint 
\begin{align}
\frac{d_{1}}{d_{2}}=\frac{d}{2^{m}}.\label{eq:condition-3}
\end{align}
From Lemma~\ref{lem:unital}, it follows that the rank of a state
cannot decrease under unital operations. Thus, Eq.~(\ref{eq:transformation-2})
implies 
\begin{align}
\frac{d_{1}}{d_{2}}\geq r,\label{eq:condition-4}
\end{align}
where $r$ is the rank of $\rho$. The inequality~(\ref{eq:condition-4})
implies the majorization relation 
\begin{align}
\rho\otimes\frac{\openone}{d_{2}}\succ\psi^{\otimes m}\otimes\frac{\openone}{d_{1}},
\end{align}
as can be seen by recalling that the maximally mixed state is majorized
by any other state of the same dimension. Thus, by Lemma~\ref{lem:unital},
Eqs.~(\ref{eq:condition-3}) and (\ref{eq:condition-4}) are necessary
and sufficient conditions for the transformation in Eq.~(\ref{eq:transformation-2}).

Eqs.~(\ref{eq:condition-3}) and (\ref{eq:condition-4}) further
imply the inequality 
\begin{align}
m\leq\log_{2}\left(d/r\right),
\end{align}
which proves that single-shot distillable purity is bounded above
by $\left\lfloor \log_{2}\left(d/r\right)\right\rfloor $. Moreover,
it is straightforward to check that Eqs.~(\ref{eq:condition-3})
and (\ref{eq:condition-4}) hold true if we choose $m=\left\lfloor \log_{2}\left(d/r\right)\right\rfloor $,
$d_{1}=d$, and $d_{2}=2^{m}$. This completes the proof.

\section{\label{sec:Proof-5}Proof of Eq.~(\ref{eq:puritycost-1})}

In the first step of the proof, let $m$ be an integer such that $m$
copies of a pure single-qubit state $\psi_{2}$ can be transformed
into the desired state $\rho$ via some unital operation $\Lambda_{\mathrm{U}}$,
i.e., 
\begin{align}
\Lambda_{\mathrm{U}}\left[\psi_{2}^{\otimes m}\otimes\frac{\openone}{d_{1}}\right]=\rho\otimes\frac{\openone}{d_{2}}\label{eq:transformation}
\end{align}
with some integers $d_{1}$ and $d_{2}$. Since we require that $\Lambda_{\mathrm{U}}$
preserves the dimension of the Hilbert space, it must be that 
\begin{align}
\frac{d_{1}}{d_{2}}=\frac{d}{2^{m}}.\label{eq:condition-1}
\end{align}

A necessary requirement for the existence of the unital operation
in Eq.~(\ref{eq:transformation}) is that due to Lemma~\ref{lem:unital}
the maximal eigenvalue of $\rho\otimes\openone/d_{2}$ -- which is
$\lambda_{\max}/d_{2}$ -- is smaller or equal than the maximal eigenvalue
of the resource state $\psi_{2}^{\otimes m}\otimes\openone/d_{1}$,
i.e. 
\begin{align}
\frac{\lambda_{\max}}{d_{2}}\leq\frac{1}{d_{1}}.\label{eq:condition-2}
\end{align}
It is now crucial to note that due to the special form of the resource
state, Eq.~(\ref{eq:condition-2}) directly implies the majorization
relation 
\begin{align}
\rho\otimes\frac{\openone}{d_{2}}\prec\psi_{2}^{\otimes m}\otimes\frac{\openone}{d_{1}}.
\end{align}
Thus, by Lemma~\ref{lem:unital}, Eqs.~(\ref{eq:condition-1}) and
(\ref{eq:condition-2}) are necessary and sufficient for the transformation
in Eq.~(\ref{eq:transformation}).

In the next step, we note that Eqs.~(\ref{eq:condition-1}) and (\ref{eq:condition-2})
imply the following inequality: 
\begin{align}
m\geq\log_{2}(d\lambda_{\max}),
\end{align}
which means that the single-shot purity cost is bounded below by $\left\lceil \log_{2}(d\lambda_{\max})\right\rceil $.
In the last step, it is straightforward to check that Eqs.~(\ref{eq:condition-1})
and (\ref{eq:condition-2}) hold true if we choose $m=\left\lceil \log_{2}(d\lambda_{\max})\right\rceil $,
$d_{1}=d$, and $d_{2}=2^{m}$. This completes the proof.

\section{\label{sec:RenyiPurity}Properties of Rényi $\alpha$-purities }

Here we will prove that the Rényi $\alpha$-purity 
\begin{align}
\pur_{\alpha}(\rho)=\log_{2}d-S_{\alpha}(\rho)
\end{align}
is a purity measure, i.e., it fulfills the requirements P1-P4 stated
in the main text. For this, we will use the fact that the Rényi entropy
is Schur concave for all $\alpha\geq0$~\cite{Olkin2011}: 
\begin{align}
\rho\succ\sigma\Rightarrow S_{\alpha}(\rho)\leq S_{\alpha}(\sigma).\label{eq:SchurConcavity}
\end{align}
We will now prove each of the conditions P1-P4. 
\begin{itemize}
\item[P1] $\pur_{\alpha}(\openone/d)=0$ follows immediately from $S_{\alpha}(\openone/d)=\log_{2}d$
for all $\alpha$. Furthermore Eq.~(\ref{eq:SchurConcavity}) and
the fact that the maximally mixed state $\openone/d$ is majorized
by any other state of the same dimension imply nonnegativity: 
\begin{align}
\pur_{\alpha}(\rho)\geq\pur_{\alpha}(\openone/d)=0.
\end{align}

\item[P2] Due to Lemma~\ref{lem:unital}, we have $\rho\succ\Lambda_{\mathrm{U}}[\rho]$
for any unital operation $\Lambda_{\mathrm{U}}$. Eq.~(\ref{eq:SchurConcavity})
then implies that $\pur_{\alpha}(\rho)\geq\pur_{\alpha}(\Lambda_{\mathrm{U}}[\rho])$. 
\item[P3] The Rényi entropy is additive: $S_{\alpha}(\rho\otimes\sigma)=S_{\alpha}(\rho)+S_{\alpha}(\sigma)$.
This directly implies additivity of $\pur_{\alpha}$. 
\item[P4] $\pur_{\alpha}(\ket{\psi}_{d})=\log_{2}d$, since $S_{\alpha}(\ket{\psi}_{d})=0$
for all $\alpha$. 
\end{itemize}
The Rényi $\alpha$-purity is convex for $0\leq\alpha\le1$, since
$S_{\alpha}$ is concave in this region~\cite{Bengtsson2007}. For
$\alpha>1$ the Rényi entropy $S_{\alpha}$ is neither concave nor
convex \cite{RenyiEntrConcFiniteAlp78}.

\section{\label{sec:Proof-6}Proof of Eq.~(\ref{eq:maxhierarchy}) }

Let us denote with $U_{\ent}$ the unitary operation that provides
the maximum for $\ent_{\max}(\rho)$. We find 
\begin{align}
\ent_{\max}(\rho) & =\ent(U_{\ent}\rho U_{\ent}^{\dagger})\leq\dis(U_{\ent}\rho U_{\ent}^{\dagger})\notag\\
 & \leq\sup_{U}\dis(U\rho U^{\dagger})=\dis_{\max}(\rho).
\end{align}
Similarly, let $U_{\dis}$ be the unitary that leads to $\dis_{\max}(\rho)$.
We obtain 
\begin{align}
\dis_{\max}(\rho) & =\dis(U_{\dis}\rho U_{\dis}^{\dagger})\leq\coh_{N}(U_{\dis}\rho U_{\dis}^{\dagger})\leq\sup_{U}\coh_{N}(U\rho U^{\dagger})\notag\\
 & =\sup_{U}\coh(U\rho U^{\dagger})=\pur(\rho),
\end{align}
where we used Theorem~\ref{thm:Cmax} as well as the fact that any
two bases can be mapped onto each other by a unitary operation.

\section{\label{sec:incoherentent}Purity bounds on entanglement\protect
\protect \protect \protect \\
 by incoherent operations}

The amount of entanglement which can be generated by an optimal incoherent
operation is bounded by the coherence~\cite{Streltsov2015}: 
\begin{align}
\coh_{\mathrm{r}}(\rho^{A})= & \lim_{d_{B}\rightarrow\infty}\left\{ \sup_{\Lambda_{\mathrm{i}}}\ent_{\mathrm{r}}^{A:B}\left(\Lambda_{\mathrm{i}}\left[\rho^{A}\otimes\proj{0}^{B}\right]\right)\right\} ,\label{eq:activation}
\end{align}
where the supremum is performed over all bipartite incoherent operations
$\Lambda_{\mathrm{i}}$~\cite{Streltsov2015} and $\coh_{\mathrm{r}}$
and $\ent_{\mathrm{r}}$ are the relative entropy of coherence and
entanglement respectively. Our results from Theorem~\ref{thm:Cmax}
allow us to further connect these results to the relative entropy
of purity: Using a unitary to rotate $\rho^{A}$ into a maximally
coherent basis followed by the application of the optimal incoherent
operation, the generated entanglement amounts to 
\begin{align}
\pur_{\mathrm{r}}(\rho^{A})= & \sup_{U}\lim_{d_{B}\rightarrow\infty}\left\{ \sup_{\Lambda_{\mathrm{i}}}\ent_{\mathrm{r}}^{A:B}\left(\Lambda_{\mathrm{i}}\left[U\rho^{A}U^{\dag}\otimes\proj{0}^{B}\right]\right)\right\} 
\end{align}
with the relative entropy of purity $\pur_{\mathrm{r}}$.

A similar result can be established for the geometric entanglement
$\ent_{\mathrm{g}}(\rho)=1-\max_{\sigma\in\mathcal{S}}F(\rho,\sigma)$
and the geometric coherence $\coh_{\mathrm{g}}(\rho)=1-\max_{\sigma\in\mathcal{I}}F(\rho,\sigma)$,
recalling that Eq.~(\ref{eq:activation}) also holds true for these
quantities~\cite{Streltsov2015}. If we introduce the geometric purity
as $\pur_{\mathrm{g}}(\rho)=1-F(\rho,\openone/d)=1-\frac{1}{d}(\Tr\sqrt{\rho})^{2}$,
we immediately obtain the following result: 
\begin{align}
\pur_{\mathrm{g}}(\rho^{A})= & \sup_{U}\lim_{d_{B}\rightarrow\infty}\left\{ \sup_{\Lambda_{\mathrm{i}}}\ent_{\mathrm{g}}^{A:B}\left(\Lambda_{\mathrm{i}}\left[U\rho^{A}U^{\dag}\otimes\proj{0}^{B}\right]\right)\right\} .
\end{align}

In \cite{Orieux2015} a CNOT-gate ($U_{\mathrm{CNOT}}$) is used to
create entanglement out of the two-qubit input state 
\begin{align}
\rho_{\text{in}}=\rho^{A}\otimes\proj{0}^{B}
\end{align}
with system $A$ being the control qubit system and $B$ being the
target qubit, i.e.\ $\rho_{\text{out}}=U_{\mathrm{CNOT}}\rho_{\text{in}}U_{\mathrm{CNOT}}^{\dag}$.
In this two-qubit scenario the entanglement of the state $\rho_{\text{out}}$
can be measured by the negativity $\mathcal{N}(\rho)=\sum_{j}|\lambda_{j}^{-}|$
where $\lambda_{j}^{-}$ are the negative eigenvalues of the partial
transpose of $\rho$~\cite{Peres1996,Horodecki1996,PhysRevA.58.883,Vidal2002}.
The negativity of $\rho_{\mathrm{out}}$ is closely related to the
$l_{1}$-norm of coherence of the state $\rho^{A}$~\cite{Nakano2013}:
\begin{align}
\mathcal{N}(\rho_{\text{out}})=\abs{\rho_{01}^{A}}=\frac{\coh_{\ell_{1}}(\rho^{A})}{2},
\end{align}
with $\rho_{01}^{A}$ being the off-diagonal element of the chosen
qubit basis.

For a single qubit there is a direct relation between $\coh_{\ell_{1}}$
and the geometric coherence \cite{Streltsov2015}: $C_{\ell_{1}}=\sqrt{1-(1-2C_{\mathrm{g}})^{2}}$.
Using Theorem~\ref{thm:Cmax} to bound the geometric coherence by
the geometric purity, we obtain the following bound for the negativity
\begin{align}
\mathcal{N}(\rho_{\text{out}})\leq\sqrt{1-(1-2\pur_{\mathrm{g}})^{2}},
\end{align}
where equality holds if the eigenstates of $\rho^{A}$ form a maximally
coherent basis.
\end{document}